\pdfoutput=1
\documentclass[journal]{IEEEtran}

\usepackage{amsmath,amssymb,amsfonts,amsthm}
\usepackage{graphicx}
\usepackage[caption=false,font=footnotesize]{subfig}
\usepackage{tikz}
\usetikzlibrary{arrows.meta,positioning,fit,shapes.geometric,calc,backgrounds}
\usepackage{booktabs}
\usepackage{array}
\usepackage{cite}
\usepackage{url}
\usepackage{placeins}
\usepackage{mathtools}
\usepackage{algorithm}
\usepackage{algpseudocode}

\newtheorem{theorem}{Theorem}
\newtheorem{lemma}{Lemma}
\newtheorem{proposition}{Proposition}
\newtheorem{definition}{Definition}
\newtheorem{remark}{Remark}

\newcommand{\E}{\mathbb{E}}
\newcommand{\I}{\mathrm{I}}
\newcommand{\HH}{\mathrm{H}}
\newcommand{\Cdel}{C(d)}
\newcommand{\CN}{C_N(d)}
\newcommand{\emb}{\#\mathrm{emb}}
\newcommand{\CORD}{\mathcal{C}_N^{\mathrm{ORD}}(d)}
\newcommand{\CRLD}{\mathcal{C}_N^{\mathrm{RLD}}(d)}

\begin{document}

\title{Orbit Reduction and Learned Run Distributions\\
for Finite-Blocklength Binary Deletion Channels}

\author{Hassan Khodaiemehr and Chen Feng%
\thanks{H.~Khodaiemehr and C.~Feng are with the School of Engineering, The University of British Columbia,
Okanagan Campus, Kelowna, BC, Canada (e-mail: hassan.khodaiemehr@ubc.ca; chen.feng@ubc.ca).}%
\thanks{A preliminary conference version of parts of this work was presented at CWIT~2024
(no proceedings).}%
}

\maketitle

\begin{abstract}
In DNA data storage, racetrack memories, and packet networks, data are written as many short strands
of fixed length.
The receiver knows where each strand begins and ends, and deletions occur only inside a strand.
The relevant limit for such systems is the block capacity \(\CN\), the largest mutual information
between a length-\(N\) input and the output of a binary deletion channel with deletion probability
\(d\).
Because the boundaries are known, \(\CN/N\) is never smaller than the classical capacity \(\Cdel\).
Computing \(\CN\) is hard because the input takes \(2^N\) values.
For \(0\le d<1\), we show that the optimal input is unique, gives positive probability to every
string, and is unchanged by complementing or reversing the strings.
We then introduce the optimized run distribution (ORD), which assigns one probability to each number
of runs; it is optimal for \(N\le3\) and is solved with a numerical optimality certificate up to
\(N=16\).
For longer strands, an exact recursion for the output distribution gives unbiased rate estimates,
and an empirical Bernstein inequality turns them into confidence intervals.
Within the statistical precision, the best inputs found in this way perform no better than a
one-parameter Markov input.
For strands of \(100\)--\(200\) bits, known boundaries increase the rate by up to \(0.026\)~bits per
symbol beyond the best certified upper bound on the capacity of the unsegmented channel.
Combined with Fano's inequality, the block capacity also bounds the rate of any code that uses a
single strand; in the case tabulated in the literature, this bound is tighter than the best known
finite-length bound for strands longer than about \(90\) bits.
Conversely, block rates yield lower bounds on \(\Cdel\); at \(d=0.1\), the bound is within
\(1.1\times10^{-4}\)~bits/use of the best certified lower bound.
Finally, in our experiments InfoNCE estimates, even with the optimal critic, lose most of their
ability to distinguish good inputs from flat ones as the strand length grows.
\end{abstract}

\begin{IEEEkeywords}
Binary deletion channel, block capacity, DNA data storage, empirical Bernstein bound, run-length
distribution.
\end{IEEEkeywords}

\section{Introduction}
\label{sec:intro}

\IEEEPARstart{M}{any} storage and communication systems store or send data as a large number of short
strands of fixed length.
In DNA data storage, information is written into synthetic DNA molecules of about \(100\)--\(200\)
nucleotides; each molecule is synthesized, stored, and sequenced separately, and these steps
introduce substitutions, insertions, and deletions inside the
molecule~\cite{ErlichZielinski2017,HeckelMikutisGrass2019,ShomoronyHeckel2021,WeinbergerMerhav2022}.
Nanopore sequencing also reads one molecule at a time~\cite{MaoDiggaviKannan2018}.
In racetrack memories, data are stored in fixed-length tracks, and shift errors act as deletions
and insertions within a track~\cite{CheeRacetrack2018}.
Packet networks and file-synchronization protocols send fixed-length units whose boundaries are
marked by framing, while losses occur inside a unit~\cite{MaRamchandranTse2011,Diggavi2007}.
In all these systems, the receiver knows which output came from which strand, and a deletion
cannot move symbols from one strand to another.

The binary deletion channel (BDC) is the basic model of such errors: each transmitted bit is deleted
independently with probability \(d\), and the surviving bits arrive in order, without any indication
of where the deletions
occurred~\cite{Dobrushin1967,Mitzenmacher2009,CheraghchiCrawford2020,Mercier2010}.
Its classical capacity \(\Cdel\) assumes one arbitrarily long stream whose output boundaries are
unknown, and it is defined as the limit of \(\max\I(X^N;Y)/N\) as
\(N\to\infty\)~\cite{Dobrushin1967}.
No closed form is known, and lower and upper bounds have been improved over several
decades~\cite{MitzenmacherDrinea2006,DrineaMitzenmacher2007,FertonaniDuman2010,Papailiopoulos2026}.

For strands with known boundaries, the natural model is the \emph{block channel}, which maps one
length-\(N\) input \(X^N\) to the BDC output \(Y\) and is used independently on each strand.
It is a memoryless channel whose input alphabet is \(\{0,1\}^N\), and its capacity,
\begin{equation}
\label{eq:CN}
\CN=\max_{p(x^N)}\I(X^N;Y)\quad\text{bits per strand},
\end{equation}
is achieved by codes that span many strands, with vanishing error probability as the number of
strands grows.
The block capacity differs from \(\Cdel\) in three ways.
First, it describes a fixed strand length, not a limit as \(N\to\infty\).
Second, known boundaries are useful side information: \(\CN/N\ge\Cdel\) for every \(N\)
(Theorem~\ref{thm:sandwich}), and the difference is the rate gained by segmentation.
For example, at \(N=16\) and \(d=0.1\) a certified input achieves \(0.693\)~bits/use, whereas the
capacity of the unsegmented channel, the largest rate achievable with vanishing error probability,
is at most \(0.5725\)~bits/use~\cite{Papailiopoulos2026}.
Third, \(\CN\) is not a one-shot quantity such as the best rate of a single code of length \(N\) at a
given error probability~\cite{PolyanskiyPoorVerdu2010} or the size of a zero-error
deletion-correcting code~\cite{Levenshtein1966}; finite-length bounds for a single
strand~\cite{MorozovDuman2026FiniteLength} and codes for segmented
deletions~\cite{LiuMitzenmacher2010,LiuDuman2025Segmented,HaghighatDuman2025HalfMarker} address these
related questions.
Finite-length deletion channels are also used as tools to bound \(\Cdel\) from
above~\cite{FertonaniDuman2010,PintoRibeiro2026ParallelBA}, but there the goal is the limit
\(N\to\infty\), not a given strand length.

Problem~\eqref{eq:CN} is concave and can be solved by the Blahut--Arimoto (BA)
algorithm~\cite{Blahut1972,Arimoto1972}, but the input takes \(2^N\) values and the channel matrix has
\(2^N(2^{N+1}-1)\) entries.
Direct computation therefore stops at \(N\approx12\)--\(15\), far below practical strand lengths.
Two tools are needed to go further: input families with few parameters, and a way to evaluate their
rates when the output distribution cannot be tabulated.
For the first, we use the number of runs (maximal blocks of equal bits) of a string.
Run-length inputs have long been used as coding schemes for the unsegmented
channel~\cite{MitzenmacherDrinea2006,DrineaMitzenmacher2007,KhodaiemehrFengDuman2026RL}, but not as
input families for~\eqref{eq:CN}.
For the second, a common choice in learned communication is a neural mutual-information (MI)
estimator~\cite{Belghazi2018,Nguyen2010f,Oord2018InfoNCE,Poole2019variational,SongErmon2019,Fritschek2019,LiSheFan2023};
we show that InfoNCE estimates, even with the optimal critic, lose resolution for long strands, and
we replace them by an exact computation of the output distribution.

The main contributions are summarized in Table~\ref{tab:contrib}.
\begin{table}[t]
\caption{Summary of contributions.}
\label{tab:contrib}
\centering
\scriptsize
\setlength{\tabcolsep}{2pt}
\begin{tabular}{@{}>{\raggedright}p{0.17\columnwidth}>{\raggedright}p{0.6\columnwidth}>{\raggedright\arraybackslash}p{0.15\columnwidth}@{}}
\toprule
Topic & Result & Where \\
\midrule
Optimal input & For \(d<1\): unique, positive on every string, unchanged by complementation and reversal;
one probability per symmetry class (orbit) suffices & Thm.~\ref{thm:orbit}, Prop.~\ref{prop:unique} \\
Run-count inputs & ORD (one probability per number of runs): optimal for \(N\le3\), contains all
symmetric Markov inputs, numerically certified optima up to \(N=16\) & Prop.~\ref{prop:markov}, Sec.~\ref{sec:exact-rates} \\
Rates at any \(N\) & Exact output distribution in \(O(N^2|y|)\) operations; unbiased rate and gradient
estimates; confidence intervals that hold with probability \(\ge1-\delta\) & Prop.~\ref{prop:marginal}, Thm.~\ref{thm:conf} \\
Learning inputs & InfoNCE-type estimates cannot exceed \(\log_2B\) bits and, in our experiments,
resolve little of the gap between good and flat inputs as \(N\) grows, even with the optimal critic;
transfer and exact-gradient learning improve on the flat run-count law up to \(N=512\) and \(N=128\),
and a one-parameter Markov input performs as well within the statistical precision & Sec.~\ref{sec:est-cmp}, \ref{sec:transfer-expts} \\
Value of boundaries & At \(N=100\), rates above the best certified upper bound on \(\Cdel\) at every
tested \(d\) for which~\cite{Papailiopoulos2026} reports one; a single-strand converse tighter than~\cite{MorozovDuman2026FiniteLength} for
\(N\ge92\) &
Sec.~\ref{sec:N100}, \ref{sec:md} \\
Bounds on \(\Cdel\) & Lower bounds above Drinea--Mitzenmacher at every tested \(d\in[0.02,0.8]\), within
\(1.1\times10^{-4}\) of~\cite{Papailiopoulos2026} at \(d=0.1\), and matching the best known for small \(d\) & Sec.~\ref{sec:papa} \\
\bottomrule
\end{tabular}
\end{table}
The rest of this paper is organized as follows. The block-channel theory is developed in Section~\ref{sec:orbit}, variational and exact-gradient
learning in Sections~\ref{sec:framework} and~\ref{sec:transfer}, and the numerical results in
Section~\ref{sec:expts}; Section~\ref{sec:papa} relates the block channel to \(\Cdel\),
Section~\ref{sec:accuracy} discusses limitations, and concluding remarks are given in
Section~\ref{sec:concl}.
Relative to the earlier version~\cite{KhodaiemehrFeng2026arXiv}, the orbit reduction, whose invariance
step was incorrect, is replaced by Theorem~\ref{thm:orbit}, all ORD values are recomputed and
certified, and every nested Monte Carlo evaluation is replaced by the exact-marginal estimator with
confidence guarantees.

\section{Block Channel, Symmetries, and Run-Count Inputs}
\label{sec:orbit}

This section collects the finite-block results on which the rest of the paper relies.
We first recall the channel law and its elementary properties, then reduce~\eqref{eq:CN} by
symmetry, introduce the run-count families RLD and ORD, and derive an exact recursion for their
output marginals.
It concludes with the two tools that connect a finite-block rate to the capacity
\(\Cdel\): a block-conversion inequality and a confidence bound for simulated rates.
Table~\ref{tab:notation} summarizes the main notation.

\begin{table}[t]
\caption{Main notation.}
\label{tab:notation}
\centering
\scriptsize
\renewcommand{\arraystretch}{1.1}
\begin{tabular}{@{}>{\raggedright\arraybackslash}p{0.30\columnwidth}>{\raggedright\arraybackslash}p{0.64\columnwidth}@{}}
\toprule
Symbol & Meaning \\
\midrule
\(d\), \(N\) & Deletion probability; blocklength \\
\(X^N\), \(Y\), \(|y|\) & Input block; output (surviving bits); output length \\
\(\emb(y\hookrightarrow x)\) & Number of embeddings of \(y\) as a subsequence of \(x\) \\
\(W_x(y)\) & Channel law \(\Pr(Y=y\mid X=x)\), see~\eqref{eq:kernel} \\
\(p\), \(q\) & Input law on \(\{0,1\}^N\); induced output law \\
\(\I(p)\) & Mutual information \(\I(X^N;Y)\) in bits \\
\(\Cdel\), \(\CN\) & Capacity of the unsegmented BDC; block capacity~\eqref{eq:CN} \\
\(G\), \(\mathcal O_i\), \(K_N\) & Complement--reversal group; its orbits; number of orbits \\
\(r(x)\), \(R_\ell\) & Number of runs of \(x\); set of strings with \(\ell\) runs \\
\(w\), \(p_w\) & Run-class (or orbit) weights; the corresponding input law \\
\(\CORD\), \(\CRLD\) & Optimal ORD rate; rate of the flat run-length distribution \\
\(s_\ell(w)\) & Class-averaged divergence \(D(W_x\Vert q_w)\) over \(x\in R_\ell\) \\
\(\mathrm{pen}_N(d)\) & Length penalty \(\HH(\mathrm{Bin}(N,1-d))/N\) of Lemma~\ref{lem:fd} \\
\(f(x)\), \(Z\), \(g\) & Self-information \(-\log_2p(x)\); posterior term \(-\log_2\Pr(X\mid Y)\); upper bound on \(Z\)~\eqref{eq:Zrange} \\
\(\tau\), \(t(\tau)\) & Truncation level; exact truncation tail~\eqref{eq:tail} \\
\(n\), \(\delta\), \(\varepsilon_n\) & Sample size; confidence parameter; empirical Bernstein margin~\eqref{eq:eps} \\
\(L_\delta\) & Lower bound on \(\Cdel\) holding with probability \(\ge1-\delta\)~\eqref{eq:cap-conf} \\
\(u_\ell\), \(f_d\), \(w'\) & Normalized run index; transfer profile; transferred weights~\eqref{eq:transfer} \\
\(\Delta(N',d)\) & Rate gain of transfer over RLD at length \(N'\) \\
\(p\) (Markov input) & Flip probability of a symmetric first-order Markov input \\
\(L_{\rm P}\), \(U_{\rm P}\) & Certified lower and upper bounds on \(\Cdel\) of~\cite{Papailiopoulos2026} \\
\bottomrule
\end{tabular}
\end{table}

Each bit of \(X^N\in\{0,1\}^N\) is deleted independently with probability \(d\), and the receiver
observes the concatenation \(Y\) of the surviving bits.
The transition law \(W_x(y):=\Pr(Y=y\mid X=x)\) is the embedding-count kernel~\cite{Mitzenmacher2009}
\begin{equation}
\label{eq:kernel}
W_x(y)=\emb(y\hookrightarrow x)\,d^{N-|y|}(1-d)^{|y|},
\end{equation}
where \(\emb(y\hookrightarrow x)\) is the number of ways of obtaining \(y\) as a subsequence of
\(x\).
The count obeys the standard recursion
\(E(i,j)=E(i-1,j)+\mathbf 1\{x_i=y_j\}E(i-1,j-1)\), with \(E(i,0)=1\) and \(E(0,j)=0\) for
\(j\ge1\), where \(E(i,j)\) counts embeddings of \(y_1^j\) into \(x_1^i\); hence
\(\emb(y\hookrightarrow x)=E(N,|y|)\) is computed in \(O(N|y|)\) operations.
We refer to this recursion as the embedding recursion.
Two further facts are classical~\cite{Mitzenmacher2009}: a string of length \(m\) has at most
\(\binom Nm\) embeddings into any string of length \(N\), and
\(\sum_x\emb(y\hookrightarrow x)=2^{N-m}\binom Nm\) for every \(y\) of length \(m\).

The dense kernel has \(2^N(2^{N+1}-1)\) entries, and we use it for unrestricted inputs only for
\(N\le12\); for run-count inputs, class averages suffice (Section~\ref{sec:class-avg}).
Large-\(N\) computations use rescaling or the log domain, because embedding counts exceed 64-bit
integers.

\subsection{Symmetry reduction}
Program~\eqref{eq:CN} is the capacity, in bits per block, of the memoryless channel induced by one
length-\(N\) use.
The reduction uses two standard notions for a finite group \(G\) acting on a finite set
\(\mathcal S\), that is, a group of bijections of \(\mathcal S\) that contains the identity and is
closed under composition and inversion.

\begin{definition}[Orbit and fixed set]
\label{def:orbit}
The \emph{orbit} of \(s\in\mathcal S\) is \(G(s):=\{g(s):g\in G\}\).
Two orbits are either equal or disjoint, and therefore the orbits partition \(\mathcal S\); a function
on \(\mathcal S\) is \emph{\(G\)-invariant} if it is constant on every orbit.
The \emph{fixed set} of \(g\in G\) is \(\mathrm{Fix}(g):=\{s\in\mathcal S:g(s)=s\}\).
\end{definition}

\begin{lemma}[Burnside's lemma~\cite{Rotman1995}]
\label{lem:burnside}
The number of orbits of \(G\) on \(\mathcal S\) equals \(|G|^{-1}\sum_{g\in G}|\mathrm{Fix}(g)|\),
the average number of elements fixed by a group element.
\end{lemma}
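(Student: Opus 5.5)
The plan is the standard double-counting argument. Count the set of pairs
\[
\mathcal P:=\{(g,s)\in G\times\mathcal S:\ g(s)=s\}
\]
in two ways. Summing over \(g\) first gives \(|\mathcal P|=\sum_{g\in G}|\mathrm{Fix}(g)|\). Summing over \(s\) first gives \(|\mathcal P|=\sum_{s\in\mathcal S}|G_s|\), where \(G_s:=\{g\in G:g(s)=s\}\) is the stabilizer of \(s\). Since \(G\) contains the identity and is closed under composition and inversion, \(G_s\) is a subgroup of \(G\).

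The key step is the orbit--stabilizer relation \(|G(s)|\,|G_s|=|G|\). To prove it, consider the map \(G\to G(s)\), \(g\mapsto g(s)\). It is surjective by the definition of the orbit. Moreover, \(g(s)=h(s)\) holds if and only if \(h^{-1}g\in G_s\), that is, if and only if \(g\in hG_s\). Hence every fibre of the map is a left coset of \(G_s\), and each such coset has exactly \(|G_s|\) elements because left multiplication by \(h\) is a bijection. The fibres partition \(G\), so \(|G|=|G(s)|\,|G_s|\). This is the only step that needs care; in particular, the coset identification uses that \(G\) is closed under inversion.

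Substituting gives
\[
\sum_{s}|G_s|=|G|\sum_{s}\frac{1}{|G(s)|}.
\]
Group the last sum by orbits, which is allowed because the orbits partition \(\mathcal S\) by Definition~\ref{def:orbit}. Each orbit \(\mathcal O\) contributes \(\sum_{s\in\mathcal O}1/|\mathcal O|=1\). Therefore \(\sum_s 1/|G(s)|\) equals the number of orbits. Dividing the two expressions for \(|\mathcal P|\) by \(|G|\) yields the claim.

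No result from earlier in the paper is needed beyond the orbit partition of Definition~\ref{def:orbit}. Finiteness of \(G\) and \(\mathcal S\) guarantees that every sum is finite.
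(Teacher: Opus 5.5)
Your double-counting argument through the orbit--stabilizer relation \(|G(s)|\,|G_s|=|G|\) is correct and complete, and it uses only the group axioms and the orbit partition, as the paper's definition of a group action requires. The paper gives no proof of its own and cites Rotman for this classical result; the standard proof there is essentially the one you give.
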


Let \(c\) denote bitwise complementation and \(r\) bit reversal, both acting on strings of any
length.
If the index set \(\{i_1<\cdots<i_m\}\) embeds \(y\) into \(x\), then the same index set embeds
\(c(y)\) into \(c(x)\), and the reflected set \(\{N+1-i_m<\cdots<N+1-i_1\}\) embeds \(r(y)\) into
\(r(x)\); both maps are bijections between embeddings and preserve \(|y|\).
Hence, for the group \(G=\{\mathrm{id},c,r,cr\}\),
\begin{equation}
\label{eq:auto}
W_{g(x)}\bigl(g(y)\bigr)=W_x(y)\qquad\text{for all }g\in G,\;x,\;y.
\end{equation}
Mutual information is concave in the input law and is invariant under a simultaneous bijective
relabeling of both alphabets that preserves the channel; averaging an input law over a finite group
of such relabelings therefore cannot decrease it, as in the classical treatment of symmetric
channels~\cite{CoverThomas2006}.
For completeness, the argument is included in the proof of the following reduction.

\begin{theorem}[Complement--reversal orbit reduction]
\label{thm:orbit}
\label{thm:perm}
Let \(\mathcal{O}_1,\ldots,\mathcal{O}_{K_N}\) be the \(G\)-orbits of \(\{0,1\}^N\).
Then
\begin{equation}
\label{eq:red}
\CN=\max_{w\in\Delta^{K_N-1}}\I(p_w),\qquad
p_w(x)=\frac{w_i}{|\mathcal{O}_i|}\quad(x\in\mathcal{O}_i),
\end{equation}
and the map \(w\mapsto\I(p_w)\) is concave.
By Lemma~\ref{lem:burnside},
\(K_N=\tfrac14\bigl(2^N+2^{\lceil N/2\rceil}+\mathbf 1_{N\,\mathrm{even}}2^{N/2}\bigr)\).
\end{theorem}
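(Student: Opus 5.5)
The argument has three parts: symmetrize an arbitrary input over \(G\), observe that the admissible laws form a linear image of the simplex, and count orbits with Lemma~\ref{lem:burnside}.

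\textbf{Step 1: averaging does not decrease mutual information.} Let \(p\) be any input law on \(\{0,1\}^N\). For \(g\in G\), define \(p^g(x):=p(g^{-1}(x))\). Put \(\bar p:=\frac14\sum_{g\in G}p^g\).

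First we show \(\I(p^g)=\I(p)\). By~\eqref{eq:auto}, the output law induced by \(p^g\) is
\[
q^g(y)=\sum_x p(g^{-1}x)W_x(y)=\sum_{x'}p(x')W_{gx'}(y)=q(g^{-1}y).
\]
Write \(\I(p)=\sum_x p(x)D(W_x\Vert q)\). Each divergence term is preserved because
\[
D(W_{gx}\Vert q^g)=\sum_y W_{gx}(y)\log\frac{W_{gx}(y)}{q(g^{-1}y)}.
\]
Substituting \(y=g(y')\) turns this into \(D(W_x\Vert q)\), since \(g\) is a bijection on outputs of each length. Summing against \(p^g(gx)=p(x)\) gives \(\I(p^g)=\I(p)\).

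Mutual information is concave in the input law for a fixed channel. Hence \(\I(\bar p)\ge\frac14\sum_g\I(p^g)=\I(p)\).

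\textbf{Step 2: the maximization reduces to invariant laws of the form \(p_w\).} The law \(\bar p\) is \(G\)-invariant, since \(\bar p^{h}=\bar p\) for every \(h\in G\) because \(G\) is a group. A \(G\)-invariant law is constant on each orbit, so it equals \(p_w\) with \(w_i=\bar p(\mathcal O_i)\). These weights form a point of \(\Delta^{K_N-1}\). Conversely, every \(p_w\) is a valid input law. Therefore the maximum over all \(p\) equals the maximum over \(w\), which is~\eqref{eq:red}.

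Concavity of \(w\mapsto\I(p_w)\) follows because \(w\mapsto p_w\) is linear (affine on the simplex). Composing a concave function with a linear map preserves concavity.

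\textbf{Step 3: counting orbits.} Apply Lemma~\ref{lem:burnside} to \(G\). The four fixed sets are:
\begin{itemize}
\item \(|\mathrm{Fix}(\mathrm{id})|=2^N\).
\item \(|\mathrm{Fix}(c)|=0\), since no string equals its complement for \(N\ge1\).
\item \(|\mathrm{Fix}(r)|=2^{\lceil N/2\rceil}\). Palindromes are determined by their first \(\lceil N/2\rceil\) bits.
\item \(|\mathrm{Fix}(cr)|\): this requires \(x_{N+1-i}=1-x_i\). For odd \(N\), the middle bit would have to equal its own complement, so there are no fixed points. For even \(N\), the first \(N/2\) bits are free, giving \(2^{N/2}\) fixed points.
\end{itemize}
Averaging these four counts gives the stated formula for \(K_N\).

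\textbf{Main obstacle.} The delicate point is Step 1, specifically that the relabeling preserves the output law consistently. This requires \(g\) to act bijectively on outputs of every length and \(W\) to be equivariant, which is exactly~\eqref{eq:auto}. It also requires that concavity applies to the fixed channel \(W\). Both hold, so the argument closes; the remaining steps are bookkeeping.
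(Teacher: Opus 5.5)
Your proof is correct and follows the paper's argument: you symmetrize over \(G\), use concavity and the \(G\)-invariance of the averaged law to reduce to the orbit-constant lifts \(p_w\), obtain concavity in \(w\) from the affine map \(w\mapsto p_w\), and count orbits with Burnside's lemma using the same four fixed-set sizes. The only difference is cosmetic: you verify \(\I(p^g)=\I(p)\) by computing the transformed output law and the divergence terms directly, whereas the paper notes that \((g(X),g(Y))\) has joint law \(g_*p(x)W_x(y)\) and uses the invariance of mutual information under bijective relabelings.
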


\begin{IEEEproof}
For \(g\in G\) and an input law \(p\), let \(g_*p(x):=p(g^{-1}(x))\).
If \((X,Y)\) has joint law \(p(x)W_x(y)\), then by~\eqref{eq:auto} the pair \((g(X),g(Y))\) has joint
law \(p(g^{-1}(x))W_{g^{-1}(x)}(g^{-1}(y))=g_*p(x)W_x(y)\).
Since \(g\) is a bijection on inputs and on outputs,
\(\I(g_*p)=\I(g(X);g(Y))=\I(X;Y)=\I(p)\).
Let \(\bar p=|G|^{-1}\sum_{g\in G}g_*p\).
Because \(\I\) is concave in the input law,
\(\I(\bar p)\ge|G|^{-1}\sum_g\I(g_*p)=\I(p)\).
Moreover, \(h_*\bar p=|G|^{-1}\sum_g(hg)_*p=\bar p\) for every \(h\in G\), since \(g\mapsto hg\) permutes
\(G\); hence \(\bar p\) is constant on every orbit.
Applying this to a maximizer of~\eqref{eq:CN} shows that some maximizer is orbit-constant.
Orbit-constant laws are exactly the lifts \(p_w\) with \(w_i=p(\mathcal O_i)\), which
proves~\eqref{eq:red}; concavity in \(w\) follows because \(w\mapsto p_w\) is affine.
Finally, by Lemma~\ref{lem:burnside}, \(K_N\) is the average number of strings fixed by an element of \(G\):
the identity fixes \(2^N\) strings, \(c\) fixes none, \(r\) fixes the \(2^{\lceil N/2\rceil}\)
palindromes, and \(cr\) fixes the \(2^{N/2}\) strings with \(x_{N+1-i}=\bar x_i\) when \(N\) is even and
none when \(N\) is odd, because the middle bit would have to equal its complement.
\end{IEEEproof}

\begin{remark}
The reduction relies only on the global automorphisms~\eqref{eq:auto}.
Equality of the multisets of embedding counts of two rows does not, by itself, permit exchanging
their masses: for \(N=3\), \(d=\tfrac12\) and \(p(000,111,001)=(0.1,0.3,0.6)\), exchanging only the
masses of \(000\) and \(111\) lowers \(\I\) from \(0.7128\) to \(0.5895\)~bits.
The average of the two laws gives \(0.6930\)~bits, and therefore pairwise averaging can also decrease \(\I\).
This invalidates the permutation-based reduction of~\cite{KhodaiemehrFeng2026arXiv}.
\end{remark}

The orbit count \(K_N\) is still exponential in \(N\); consequently,~\eqref{eq:red} is a structural reduction
rather than an efficient algorithm.
Its maximizer is nevertheless well behaved, as the next result shows.

\begin{proposition}[Uniqueness, full support, and certificate]
\label{prop:unique}
Let \(0\le d<1\).
(a)~The maximizer \(p^\star\) of~\eqref{eq:CN} is unique, \(G\)-invariant, and satisfies
\(p^\star(x)>0\) for every \(x\).
(b)~For any partition \(\{\mathcal R_\ell\}\) of \(\{0,1\}^N\) and any weights \(w\) with \(w_\ell>0\) for
every \(\ell\), let
\(p_w(x)=w_\ell/|\mathcal R_\ell|\) on \(\mathcal R_\ell\), \(q_w=\sum_xp_w(x)W_x\), and
\(s_\ell(w)=|\mathcal R_\ell|^{-1}\sum_{x\in\mathcal R_\ell}D(W_x\Vert q_w)\).
Then
\begin{equation}
\label{eq:cert}
\I(p_w)=\textstyle\sum_\ell w_\ell s_\ell(w)\;\le\;\max_{v}\I(p_v)\;\le\;\max_\ell s_\ell(w).
\end{equation}
\end{proposition}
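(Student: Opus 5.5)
Part (b) is the standard Blahut--Arimoto sandwich, and we prove it first. The equality follows from writing \(\I(p)=\sum_xp(x)D(W_x\Vert q_p)\) with \(p=p_w\); grouping the terms by class gives \(\sum_\ell (w_\ell/|\mathcal R_\ell|)\sum_{x\in\mathcal R_\ell}D(W_x\Vert q_w)=\sum_\ell w_\ell s_\ell(w)\). The middle inequality is trivial. For the right-hand inequality we use the variational identity \(\I(p)=\min_r\sum_xp(x)D(W_x\Vert r)\), or equivalently the golden formula \(\sum_xp(x)D(W_x\Vert r)=\I(p)+D(q_p\Vert r)\ge\I(p)\). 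Taking \(r=q_w\) and any class-constant \(p_v\), we get \(\I(p_v)\le\sum_\ell v_\ell s_\ell(w)\le\max_\ell s_\ell(w)\).

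Note that \(D(W_x\Vert q_w)\) is finite because \(q_w(y)>0\) for every \(y\). Indeed, every \(x\) has positive mass, and every \(y\) of length at most \(N\) embeds into some \(x\), namely \(y\) padded with arbitrary bits, while \(W_x(y)>0\) whenever an embedding exists and \(d<1\) (for \(|y|<N\) we also need \(d>0\); when \(d=0\), restrict to \(y\) of length \(N\)).

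For (a), \(G\)-invariance follows from uniqueness together with the proof of Theorem~\ref{thm:orbit}. If \(p^\star\) is the unique maximizer, then each \(g_*p^\star\) is also a maximizer because \(\I(g_*p^\star)=\I(p^\star)\), so \(g_*p^\star=p^\star\).

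For uniqueness we use the standard fact that the capacity-achieving \emph{output} law \(q^\star\) is unique, since \(\I\) is strictly concave in the output law. Two maximizers \(p_1,p_2\) therefore induce the same \(q^\star\). Moreover, \(\I(p)=\HH(Y)-\HH(Y\mid X)\), where \(\HH(Y)\) is fixed by \(q^\star\) and \(\HH(Y\mid X)\) is linear in \(p\), so any two maximizers satisfy \(p_1W=p_2W\). Uniqueness of the input then reduces to injectivity of the map \(p\mapsto pW\), i.e.\ the rows \(\{W_x\}\) being linearly independent. We show this by restricting to outputs of length \(N\): there \(W_x(y)=(1-d)^N\mathbf 1\{y=x\}\), an invertible diagonal block. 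This holds whenever \(d<1\), including \(d=0\), so \(p_1=p_2\). This observation is the crux, and it is pleasantly easy. The only care needed is to state the strict concavity in \(q\) via the KKT conditions: \(D(W_x\Vert q^\star)\le C\), with equality on the support, and \(q^\star\) unique.

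Full support is the main obstacle. Suppose \(p^\star(x_0)=0\). The KKT conditions give \(D(W_{x_0}\Vert q^\star)\le \CN\), and we need a contradiction. First try the perturbation derivative: moving mass \(\epsilon\) onto \(x_0\) changes \(\I\) by \(\epsilon(D(W_{x_0}\Vert q^\star)-\CN)+o(\epsilon)\), which yields no contradiction by itself. The idea is instead to exploit the private output \(y=x_0\) of length \(N\): \(W_{x_0}(x_0)=(1-d)^N>0\) while \(q^\star(x_0)=p^\star(x_0)(1-d)^N=0\). Hence \(D(W_{x_0}\Vert q^\star)=+\infty>\CN\), which contradicts the KKT condition. (Equivalently, the derivative of \(\I\) in the direction of \(x_0\) is \(+\infty\), because the term \(-\epsilon(1-d)^N\log\epsilon\) dominates.) Therefore \(p^\star(x)>0\) for every \(x\). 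The key step is to write out this one-sided derivative rigorously, since KKT for the capacity problem is usually stated only for finite divergences.
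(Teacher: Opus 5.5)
Your proposal is correct. For part (a) it uses the same ingredients as the paper:
\begin{itemize}
\item injectivity of \(p\mapsto q\) through the length-\(N\) outputs, where \(W_x(y)=(1-d)^N\mathbf 1\{y=x\}\);
\item strict concavity of entropy;
\item uniqueness forcing \(g_*p^\star=p^\star\);
\item the private output \(y=x_0\), which makes \(D(W_{x_0}\Vert q^\star)=+\infty\).
\end{itemize}
The paper packages the first two as strict concavity of \(\I\) in \(p\). No KKT is needed for uniqueness of \(q^\star\): mixing two maximizers with different outputs gives a strict gain in \(\HH(q)\), while the conditional-entropy term is linear.

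For full support, the paper goes directly to the one-sided derivative of \(\I(p_\varepsilon)\) and skips the KKT detour. Your sketch of that derivative is already complete once you note three things. Outputs with \(q^\star(y)>0\) contribute \(O(\varepsilon)\). Outputs with \(q^\star(y)=0\) contribute nonnegative terms. The output \(y=x_0\) contributes \(\varepsilon(1-d)^N\log_2(1/\varepsilon)+O(\varepsilon)\), which dominates the rest.

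The real difference is in (b). The paper obtains \(\I(p_v)\le\sum_\ell v_\ell s_\ell(w)\) from concavity of \(v\mapsto\I(p_v)\) and the gradient \(\partial\I(p_v)/\partial v_\ell=s_\ell(w)-1/\ln2\). You obtain it from the identity \(\sum_xp_v(x)D(W_x\Vert q_w)=\I(p_v)+D(q_v\Vert q_w)\), with \(q_v\) the output law of \(p_v\). Both give the same supporting hyperplane. Yours is more elementary, since it needs neither differentiability of \(\I\) at \(p_w\) nor a concavity argument. The paper's route yields the gradient formula \(s_\ell-1/\ln2\) along the way, which it reuses for the exponentiated-gradient, trust-region, and mirror-ascent optimization of the ORD weights.
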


\begin{IEEEproof}
(a)~Write \(\I(p)=\HH(q)-\sum_xp(x)\HH(W_x)\) with \(q=\sum_xp(x)W_x\).
The second term is linear in \(p\).
An output of length \(N\) can arise only when no bit is deleted, and therefore \(W_{x'}(x)=(1-d)^N\mathbf 1\{x'=x\}\)
for \(|x|=N\), and therefore \(q(x)=(1-d)^Np(x)\).
Since \(d<1\), the linear map \(p\mapsto q\) is injective.
If \(p\ne p'\), then \(q\ne q'\), and strict concavity of entropy gives
\(\HH(\lambda q+(1-\lambda)q')>\lambda\HH(q)+(1-\lambda)\HH(q')\) for \(0<\lambda<1\); hence \(\I\) is
strictly concave in \(p\), and its maximizer \(p^\star\) is unique.
By the proof of Theorem~\ref{thm:orbit}, \(\I(g_*p^\star)=\I(p^\star)\), and therefore \(g_*p^\star\) is also a
maximizer and uniqueness gives \(g_*p^\star=p^\star\) for all \(g\in G\).
Suppose now that \(p^\star(x_0)=0\) for some \(x_0\), and let \(p_\varepsilon=(1-\varepsilon)p^\star+\varepsilon\mathbf 1_{x_0}\).
By the standard formula for the directional derivative of mutual
information~\cite{CoverThomas2006},
\[
\frac{d}{d\varepsilon}\I(p_\varepsilon)\Big|_{\varepsilon=0^+}=D(W_{x_0}\Vert q^\star)-\I(p^\star),
\]
where \(q^\star\) is the output law of \(p^\star\).
Since \(q^\star(x_0)=(1-d)^Np^\star(x_0)=0<(1-d)^N=W_{x_0}(x_0)\), we have
\(D(W_{x_0}\Vert q^\star)=+\infty\), and therefore \(\I(p_\varepsilon)>\I(p^\star)\) for small \(\varepsilon>0\), contradicting
optimality.
Hence \(p^\star(x)>0\) for every \(x\).

(b)~Since \(\I(p)=\sum_xp(x)D(W_x\Vert q)\) and \(p_w\) is constant on each class,
\(\I(p_w)=\sum_\ell w_\ell s_\ell(w)\).
The map \(v\mapsto\I(p_v)\) is concave, because \(v\mapsto p_v\) is affine, and its partial derivatives
at \(w\) are \(\partial\I(p_v)/\partial v_\ell|_{v=w}=s_\ell(w)-1/\ln2\), because
\(\partial\I(p)/\partial p(x)=D(W_x\Vert q)-1/\ln2\) in bits~\cite{Blahut1972,Arimoto1972}.
The first-order condition for concave functions gives, for every \(v\) in the simplex,
\[
\I(p_v)\le\I(p_w)+\sum_\ell(v_\ell-w_\ell)\bigl(s_\ell(w)-1/\ln2\bigr)=\sum_\ell v_\ell s_\ell(w),
\]
where we used \(\sum_\ell(v_\ell-w_\ell)=0\) and \(\I(p_w)=\sum_\ell w_\ell s_\ell(w)\).
The right-hand side is at most \(\max_\ell s_\ell(w)\), which proves~\eqref{eq:cert}.
(If some \(w_\ell=0\) and \(d<1\), then \(s_\ell(w)=+\infty\) and the bound is vacuous.)
\end{IEEEproof}

With singleton classes, part~(b) certifies Blahut--Arimoto; with the run classes defined next, it
certifies ORD.
Every optimum reported in this paper satisfies \((\max_\ell s_\ell-\I)/N\le10^{-6}\)~bits/use in
double precision, except Blahut--Arimoto at \(N=10\), \(d=0.7\), where the gap is \(10^{-5}\) because
some optimal masses fall below \(10^{-300}\) (weights are kept positive in the log domain).
The gaps are evaluated in floating point without directed rounding, so these optima are
\emph{numerically} certified.

\subsection{Run-count inputs: RLD and ORD}
A \emph{run} is a maximal constant substring.
Let \(r(x)\) be the number of runs of \(x\) and \(R_\ell=\{x:r(x)=\ell\}\); by the standard count of
compositions, \(|R_\ell|=2\binom{N-1}{\ell-1}\).
The run count is invariant under \(G\), and therefore each \(R_\ell\) is a union of orbits.
For \(N\ge4\) the union is in general nontrivial; for instance, \(0001\) and \(0011\) both have two
runs but lie in different orbits.
Replacing orbits by run classes therefore gives an inner approximation of~\eqref{eq:red} with only
\(N\) weights.

\begin{definition}[RLD and ORD]
\label{def:ord}
The \emph{optimized run distribution} (ORD) with weights \(w\in\Delta^{N-1}\) is
\begin{equation}
\label{eq:ord}
p_w(x)=\frac{w_{r(x)}}{|R_{r(x)}|},
\end{equation}
and \(\CORD:=\max_w \I(p_w)\).
The \emph{run-length distribution} (RLD) is the special case \(w_\ell=1/N\), with mutual
information \(\CRLD\).
\end{definition}

RLD and ORD are distributions on \(\{0,1\}^N\); they should not be confused with the i.i.d.\
run-length \emph{processes} used as infinite-blocklength coding schemes
in~\cite{MitzenmacherDrinea2006,DrineaMitzenmacher2007,KhodaiemehrFengDuman2026RL}.
Three elementary properties delimit the family.

\begin{proposition}[Properties of ORD]
\label{thm:ord-opt}
\label{prop:markov}
\label{prop:nle3}
(a)~Every input of the form \(p(x)=f(r(x))\) is an ORD law, and therefore \(\CORD\) is the largest rate among
such inputs.
(b)~A symmetric first-order Markov input with uniform first bit and flip probability \(p\) is the
ORD law with \(w_\ell=\binom{N-1}{\ell-1}p^{\ell-1}(1-p)^{N-\ell}\); in particular, the uniform input
is an ORD law.
(c)~For \(N\le3\), each run class is a single \(G\)-orbit, and therefore \(\CORD=\CN\).
\end{proposition}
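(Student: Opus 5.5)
Each of the three parts reduces to a short direct check, and I will treat them in order, using only Definition~\ref{def:ord}, the class size \(|R_\ell|=2\binom{N-1}{\ell-1}\), and Theorem~\ref{thm:orbit}.

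\textbf{Part (a).} Suppose \(p(x)=f(r(x))\) is a probability law. Set \(w_\ell:=f(\ell)\,|R_\ell|\). Then \(w_\ell\ge0\) and \(\sum_\ell w_\ell=\sum_x p(x)=1\), so \(w\in\Delta^{N-1}\). By construction \(p_w(x)=w_{r(x)}/|R_{r(x)}|=f(r(x))=p(x)\). Conversely, every ORD law is a function of \(r(x)\). Hence the ORD family is exactly the family of inputs that depend only on the run count. The maximum of \(\I\) over this family is \(\CORD\) by definition; it is attained because the simplex is compact and \(\I\) is continuous.

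\textbf{Part (b).} For the symmetric Markov input with uniform first bit, a string \(x\) with \(r(x)=\ell\) has exactly \(\ell-1\) flips among its \(N-1\) transitions. Its probability is therefore \(\tfrac12 p^{\ell-1}(1-p)^{N-\ell}\), which depends only on \(\ell\). By part~(a) it is an ORD law, with
\[
w_\ell=|R_\ell|\cdot\tfrac12 p^{\ell-1}(1-p)^{N-\ell}=\binom{N-1}{\ell-1}p^{\ell-1}(1-p)^{N-\ell}.
\]
Taking \(p=\tfrac12\) gives probability \(2^{-N}\) for every string, i.e.\ the uniform input.

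\textbf{Part (c).} Each \(R_\ell\) is \(G\)-invariant, hence a union of orbits, so it suffices to check that \(R_\ell\) is a single orbit when \(N\le3\). I will do this by enumeration.
\begin{itemize}
\item \(N=1\): \(\{0,1\}\) is one orbit under \(c\).
\item \(N=2\): \(R_1=\{00,11\}\) and \(R_2=\{01,10\}\), each a single orbit under \(c\).
\item \(N=3\): \(R_1=\{000,111\}\) and \(R_3=\{010,101\}\) are orbits under \(c\). For \(R_2=\{001,011,100,110\}\), apply the group to \(001\): \(c\) gives \(110\), \(r\) gives \(100\), and \(cr\) gives \(011\), which covers all four strings.
\end{itemize}

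With every run class a single orbit, the orbit-constant laws \(p_w\) of~\eqref{eq:red} coincide with the ORD laws. Theorem~\ref{thm:orbit} then gives \(\CN=\max_w\I(p_w)=\CORD\).

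There is no real obstacle here. The only care needed is in part~(c), to confirm that the two-run class at \(N=3\) is a single orbit; the computation above does this.
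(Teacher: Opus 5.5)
Your proposal is correct and follows essentially the same route as the paper: in (a) you set \(w_\ell=|R_\ell|f(\ell)\), in (b) you count the \(\ell-1\) flips, and in (c) you enumerate the run classes for \(N\le3\) as single complement--reversal orbits and then invoke Theorem~\ref{thm:orbit}. Your remark in (a) that the maximum over the simplex is attained is a small addition the paper leaves implicit.
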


\begin{IEEEproof}
(a)~If \(p(x)=f(r(x))\), set \(w_\ell=|R_\ell|f(\ell)\).
Then \(w_\ell\ge0\), \(\sum_\ell w_\ell=\sum_xp(x)=1\), and \(p(x)=w_{r(x)}/|R_{r(x)}|\); conversely, every
ORD law has this form.
(b)~A string with \(r(x)\) runs has exactly \(r(x)-1\) flips among its \(N-1\) transitions;
consequently, \(\Pr(X=x)=\tfrac12p^{r(x)-1}(1-p)^{N-r(x)}\), which depends on \(x\) only through \(r(x)\).
By~(a) the input is an ORD law, and summing over the \(|R_\ell|=2\binom{N-1}{\ell-1}\) strings of
\(R_\ell\) gives \(w_\ell\).
The choice \(p=\tfrac12\) gives the uniform input.
(c)~By Theorem~\ref{thm:orbit}, it suffices to show that each run class is a single orbit, because
ORD then coincides with the orbit family~\eqref{eq:red}.
For \(N=1\), \(R_1=\{0,1\}\) is a complement pair.
For \(N=2\), \(R_1=\{00,11\}\) and \(R_2=\{01,10\}\) are complement pairs.
For \(N=3\), \(R_1=\{000,111\}\) and \(R_3=\{010,101\}\) are complement pairs, and
\(R_2=\{001,011,100,110\}=\{001,\,c(001),\,r(001),\,cr(001)\}\) is the orbit of \(001\).
\end{IEEEproof}

\label{sec:class-avg}
For ORD laws of moderate length, the rate can be evaluated exactly without the dense kernel,
because the output marginal and the conditional entropy depend on the input only through
class averages:
\begin{equation}
\label{eq:class-avg}
\I(p_w)=\HH\Bigl(\sum_\ell w_\ell\overline\Pi_{\ell,\cdot}\Bigr)-\sum_\ell w_\ell\overline H_\ell,
\end{equation}
where \(\overline\Pi_{\ell,y}=|R_\ell|^{-1}\sum_{x\in R_\ell}W_x(y)\) and \(\overline H_\ell\) is the
average row entropy of \(R_\ell\).
Only the \(N\times(2^{N+1}-1)\) array \(\overline\Pi\) and the vector \(\overline H\) are stored.
They are accumulated by streaming all inputs through the prefix form
\(E_{xb}(y)=E_x(y)+\mathbf 1\{\mathrm{last}(y)=b\}E_x(y^-)\) of the embedding recursion, where \(y^-\)
drops the last bit of \(y\).
Writing \(t_m=d^{N-m}(1-d)^m\) and using \(\sum_{|y|=m}E_x(y)=\binom Nm\) for every \(x\), the average row
entropy is
\[
\overline H_\ell=-\sum_{m=0}^Nt_m\Bigl(L_{\ell m}+\tbinom Nm\log_2t_m\Bigr),
\]
with \(L_{\ell m}=|R_\ell|^{-1}\sum_{x\in R_\ell}\sum_{|y|=m}E_x(y)\log_2E_x(y)\),
and therefore the \(d\)-independent sums \(L_{\ell m}\) are computed once for all \(d\).
For \(N=16\) this takes a few minutes on a laptop CPU with under 1~GB of memory.
ORD optima are then computed in \(w\) by exponentiated-gradient or trust-region Newton steps with
the exact gradient \(s_\ell-1/\ln2\) and Hessian
\(-\frac1{\ln2}\sum_y\overline\Pi_{\ell y}\overline\Pi_{ky}/q_w(y)\), and certified by~\eqref{eq:cert}.

\subsection{Exact output marginals at any length}
\label{sec:exact-marginal}
Beyond \(N\approx16\) the array \(\overline\Pi\) can no longer be stored.
The output marginal of an ORD input can nevertheless be computed exactly for any \emph{given}
output, which is all that a Monte Carlo evaluation requires.

\begin{proposition}[Exact ORD marginal]
\label{prop:marginal}
Fix \(y\in\{0,1\}^m\).
For \(1\le i\le N\), \(0\le j\le m\), \(b\in\{0,1\}\), and \(1\le k\le i\), let
\(F_i(j,b,k)\) be the sum of \(2^{-i}d^{\,i-j}(1-d)^j\) over all prefixes \(x_1^i\) that end in \(b\)
and have \(k\) runs and over all deletion patterns of \(x_1^i\) whose survivors equal \(y_1^j\).
Then \(F_1(0,b,1)=d/2\), \(F_1(1,y_1,1)=(1-d)/2\), and
\begin{multline}
\label{eq:marg-dp}
F_{i+1}(j,b',k')=\tfrac12\sum_{b\in\{0,1\}}\bigl[d\,F_i(j,b,k_b)\\
+(1-d)\mathbf 1\{y_j=b'\}F_i(j-1,b,k_b)\bigr],
\end{multline}
where \(k_b=k'-\mathbf 1\{b\ne b'\}\), the second term is present only for \(j\ge1\), and
\(F_i(j,b,k)=0\) unless \(0\le j\le\min\{i,m\}\) and \(1\le k\le i\); the deletion term is kept at
\(j=0\).
With \(\pi_k=|R_k|2^{-N}\), the ORD marginal is
\begin{equation}
\label{eq:marg}
q_w(y)=\sum_{k=1}^N\frac{w_k}{\pi_k}\sum_{b}F_N(m,b,k),
\end{equation}
which costs \(O(N^2(m+1))\) operations and \(O(N(m+1))\) memory; for \(m=0\), directly
\(q_w(\varnothing)=d^N\).
\end{proposition}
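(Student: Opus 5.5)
The plan is to read \(F_i(j,b,k)\) as a joint probability for a uniformly random input and then reweight by run class. Let \(U\) be uniform on \(\{0,1\}^N\), and pass it through the BDC to get the output \(Y\). For each prefix length \(i\), let \(Y^{(i)}\) be the subsequence of surviving bits among \(U_1^i\). Every prefix \(x_1^i\) has probability \(2^{-i}\), and every deletion pattern with \(j\) survivors has probability \(d^{\,i-j}(1-d)^j\). Hence
\[
F_i(j,b,k)=\Pr\bigl(Y^{(i)}=y_1^j,\;U_i=b,\;r(U_1^i)=k\bigr).
\]
This reading makes the recursion a one-step Markov update, because \((U_{i+1},\text{deletion of bit }i+1)\) is independent of the past.

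\textbf{Initialization.} For \(i=1\), the bit \(b\) has probability \(1/2\). It is deleted with probability \(d\), which gives \(F_1(0,b,1)=d/2\). It survives with probability \(1-d\), and then it must equal \(y_1\), which gives \(F_1(1,y_1,1)=(1-d)/2\).

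\textbf{Recursion~\eqref{eq:marg-dp}.} I would condition on the last bit \(b\) of \(U_1^i\) and on the new bit \(b'\), which has probability \(1/2\). The run count satisfies \(k'=k+\mathbf 1\{b\ne b'\}\), so the predecessor run count is \(k_b\). There are two cases for bit \(i+1\).
\begin{itemize}
\item If bit \(i+1\) is deleted (probability \(d\)), the survivors are unchanged. We need \(Y^{(i)}=y_1^j\), which contributes \(d\,F_i(j,b,k_b)\); this includes \(j=0\).
\item If bit \(i+1\) survives (probability \(1-d\)), then \(Y^{(i+1)}=Y^{(i)}b'\). This equals \(y_1^j\) iff \(j\ge1\), \(Y^{(i)}=y_1^{j-1}\), and \(y_j=b'\).
\end{itemize}
Summing over \(b\) gives~\eqref{eq:marg-dp}. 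The boundary conventions hold because survivors never outnumber the bits, and run counts lie in \([1,i]\).

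\textbf{Marginal~\eqref{eq:marg}.} At \(i=N\), summing over \(b\) gives \(\sum_bF_N(m,b,k)=\Pr(Y=y,\,r(U)=k)\). Under the ORD law, \(p_w(x)=w_k/|R_k|=(w_k/\pi_k)2^{-N}\) for \(x\in R_k\). So \(p_w\) is a reweighting of the uniform law that is constant on each run class. Then
\[
q_w(y)=\sum_k (w_k/\pi_k)\sum_{x\in R_k}2^{-N}W_x(y),
\]
and the inner sum is exactly \(\Pr(Y=y,\,r(U)=k)\). For \(m=0\), every bit must be deleted, so \(q_w(\varnothing)=d^N\) for any \(p_w\).

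\textbf{Complexity.} The table at step \(i\) has \(O(i(m+1))\) entries, each computed in \(O(1)\). Summing over \(i\le N\) gives \(O(N^2(m+1))\) operations. Only two consecutive layers need to be kept, which gives memory \(O(N(m+1))\).

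The main point needing care is the bookkeeping in the surviving-bit case: whether \(y_j\) is matched against the new bit \(b'\) rather than against \(b\), and the treatment of the \(j=0\) deletion term. Using the probabilistic interpretation, rather than expanding the sum over prefixes and patterns directly, makes both of these transparent.
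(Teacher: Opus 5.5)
Your proof is correct and takes essentially the same route as the paper. The paper describes removing the last bit as a bijection between configurations; you instead read \(F_i(j,b,k)\) as a joint probability under a uniform input and condition on the new bit and whether it is deleted, which is the same decomposition. The reweighting \(p_w(x)=(w_k/\pi_k)2^{-N}\) on \(R_k\) and the cost count also match the paper's argument.
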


\begin{IEEEproof}
The initial values follow from the definition: the one-bit prefix \(b\) has weight \(\frac12\) and
one run, and it is either deleted (weight \(d\), \(j=0\)) or kept (weight \(1-d\), \(j=1\), which requires
\(b=y_1\)).
For the recursion, consider a prefix \(x_1^{i+1}\) ending in \(b'\) with \(k'\) runs, together with a
deletion pattern whose survivors equal \(y_1^j\).
Removing the last bit gives a prefix \(x_1^i\) ending in some \(b\), with \(k_b=k'-\mathbf 1\{b\ne b'\}\)
runs, since bit \(i+1\) opens a new run exactly when \(b'\ne b\).
If bit \(i+1\) is deleted, the survivors of \(x_1^i\) equal \(y_1^j\) and the weight gains a factor
\(d\); if it is kept, it must equal \(y_j\), the survivors of \(x_1^i\) equal \(y_1^{j-1}\), and the
weight gains a factor \(1-d\).
In both cases the uniform weight gains a factor \(\frac12\).
This correspondence is a bijection between the configurations counted by \(F_{i+1}(j,b',k')\) and
those counted by the right-hand side of~\eqref{eq:marg-dp}, which proves the recursion.
At \(i=N\) and \(j=m\), summing over \(b\) gives \(\sum_bF_N(m,b,k)=2^{-N}\sum_{x\in R_k}W_x(y)\) by
the definition of the kernel~\eqref{eq:kernel}.
Since \(p_w(x)=w_k/|R_k|=(w_k/\pi_k)2^{-N}\) on \(R_k\), multiplying by \(w_k/\pi_k\) and summing over
\(k\) gives \(q_w(y)=\sum_xp_w(x)W_x(y)\).
The table \(F_i\) has \(O(N(m+1))\) entries and each is updated in \(O(1)\) time, which gives the stated
cost.
\end{IEEEproof}

Consequently, if \((X_t,Y_t)\), \(t=1,\ldots,M\), are i.i.d.\ draws from \(p_w(x)W_x(y)\), then
\begin{equation}
\label{eq:plain-mc}
\widehat\I_M=\frac1M\sum_{t=1}^M\log_2\frac{W_{X_t}(Y_t)}{q_w(Y_t)}
\end{equation}
is an unbiased estimator of \(\I(p_w)=\E[\log_2(W_X(Y)/q_w(Y))]\), with \(W_{X_t}(Y_t)\) from the
embedding recursion and \(q_w\) from~\eqref{eq:marg}.
No inner sample is involved, and therefore the estimator is free of the nested-sampling bias discussed in
Section~\ref{sec:accuracy}.
Its summands are bounded (Proposition~\ref{prop:decomp} below), and sampling \(X\) uniformly from
\(R_\ell\) instead of from \(p_w\) gives an unbiased estimate of \(s_\ell(w)\), hence of the ORD
gradient \(s_\ell-1/\ln2\).
Implemented in C with per-step rescaling, one output costs about \(2\)~ms at \(N=128\) and
\(0.13\)~s at \(N=512\) on one CPU core.
We validated~\eqref{eq:marg} against the class-averaged kernel for all outputs at \(N=8\)
(relative error below \(3\times10^{-15}\)), and~\eqref{eq:plain-mc} against exact rates at \(N=16\)
(Section~\ref{sec:transfer-expts}).

\subsection{From finite-block rates to capacity bounds}
\label{sec:conf}
The finite-block quantities above are related to \(\Cdel\) as follows.
The first statement collects known facts; the second is the block conversion of Fertonani and
Duman.

\begin{theorem}[Sandwich~\cite{Dobrushin1967,FertonaniDuman2010}]
\label{thm:sandwich}
For every \(N\ge1\) and \(d\in[0,1]\), \(\CRLD\le\CORD\le\CN\).
Moreover, \(C_{n+m}(d)\le C_n(d)+C_m(d)\), and consequently
\(\Cdel=\lim_{n\to\infty}C_n(d)/n=\inf_{n\ge 1}C_n(d)/n\), and \(\I(p)/N\le C_n(d)/n\) for every
length-\(N\) input \(p\) and every divisor \(n\) of \(N\).
\end{theorem}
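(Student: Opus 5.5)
The chain \(\CRLD\le\CORD\le\CN\) needs only the definitions. RLD is the ORD law with \(w_\ell=1/N\), so \(\CRLD=\I(p_{(1/N,\ldots,1/N)})\le\max_w\I(p_w)=\CORD\). Every ORD law is an input law on \(\{0,1\}^N\), so \(\CORD\le\CN\) by~\eqref{eq:CN}. Nothing here depends on \(d\). At \(d=1\) the output is always empty and every quantity in the statement is \(0\).

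The substantive step is subadditivity, \(C_{n+m}(d)\le C_n(d)+C_m(d)\). Fix any input law on \(\{0,1\}^{n+m}\) and split \(X^{n+m}=(X_1,X_2)\) with \(|X_1|=n\) and \(|X_2|=m\). Since deletions are independent per bit, the BDC acting on \(X^{n+m}\) can be realized as two independent BDCs. The first maps \(X_1\) to \(Y_1\) and the second maps \(X_2\) to \(Y_2\), and the actual output is the concatenation \(Y=Y_1Y_2\). This coupling reproduces the kernel~\eqref{eq:kernel}, because an embedding of \(y\) into \(x_1x_2\) splits uniquely into an embedding of a prefix of \(y\) into \(x_1\) and of the complementary suffix into \(x_2\); alternatively, one can simply argue at the level of deletion patterns. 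Given this coupling, \(Y\) is a deterministic function of \((Y_1,Y_2)\), and the data-processing inequality gives \(\I(X;Y)\le\I(X_1X_2;Y_1Y_2)\).

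Next, \(Y_1\) and \(Y_2\) are conditionally independent given \((X_1,X_2)\), and each \(Y_i\) depends only on \(X_i\). The standard product-channel argument then applies:
\[
\I(X_1X_2;Y_1Y_2)=\HH(Y_1Y_2)-\HH(Y_1\mid X_1)-\HH(Y_2\mid X_2)\le \I(X_1;Y_1)+\I(X_2;Y_2).
\]
The right-hand side is at most \(C_n(d)+C_m(d)\). Maximizing the left-hand side over input laws gives subadditivity. I expect the only delicate point to be stating the coupling cleanly; this is exactly where the known boundary is lost, and why \(Y\) is only a function of \((Y_1,Y_2)\) and not equivalent to it.

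With subadditivity in hand, Fekete's lemma applies to the nonnegative sequence \(C_n(d)\). It gives that \(\lim_n C_n(d)/n\) exists and equals \(\inf_n C_n(d)/n\). Together with Dobrushin's definition \(\Cdel=\lim_N\max\I(X^N;Y)/N\), this identifies \(\Cdel\) with both the limit and the infimum. Finally, let \(n\mid N\) with \(N=kn\). Iterating subadditivity \(k-1\) times gives \(C_N(d)\le kC_n(d)\). Hence, for any length-\(N\) input \(p\),
\[
\I(p)/N\le C_N(d)/N\le C_n(d)/n.
\]
As a by-product, taking the infimum shows \(\CN/N\ge\Cdel\), which is the claim used in the introduction.
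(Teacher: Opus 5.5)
Your proposal is correct and follows the paper's route: the chain holds by feasibility, and the rest follows from subadditivity plus Fekete's lemma, with the divisor bound obtained by iterating subadditivity. The one difference is that the paper simply cites Dobrushin and Dalai for \(C_{n+m}(d)\le C_n(d)+C_m(d)\), whereas you prove it. You split the deletion pattern, apply data processing to the concatenation map \((Y_1,Y_2)\mapsto Y_1Y_2\), and bound the resulting product channel by \(\I(X_1;Y_1)+\I(X_2;Y_2)\), and this argument is valid.
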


The first chain holds because RLD is a feasible ORD law and ORD laws are feasible for~\eqref{eq:CN};
the remaining statements follow from subadditivity and Fekete's lemma~\cite{Dobrushin1967,Dalai2011}.
Theorem~\ref{thm:sandwich} bounds \(\Cdel\) from above by \(\CN/N\).
Its divisor bound also serves as a sanity check: neither \(\I(p)/N\) nor the mean of an unbiased
estimator of it can exceed \(C_n(d)/n\) for a divisor \(n\le12\) of \(N\), so a confidence interval
entirely above this cap indicates bias (Section~\ref{sec:accuracy}).
A finite-block rate bounds \(\Cdel\) from below only after a penalty for the unknown block
boundaries is subtracted.

\begin{lemma}[Block conversion~{\cite[Eq.~(39)]{FertonaniDuman2010}}]
\label{lem:fd}
For every length-\(N\) input law \(p\) and every \(d\in[0,1]\),
\begin{equation}
\label{eq:fd-lb}
\Cdel\ \ge\ \frac{\I(p)}{N}-\mathrm{pen}_N(d),
\end{equation}
where \(\mathrm{pen}_N(d):=\HH\bigl(\mathrm{Bin}(N,1-d)\bigr)/N\).
\end{lemma}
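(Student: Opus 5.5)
The plan is to build a code for the unsegmented BDC by concatenating independent length-\(N\) blocks drawn from \(p\), and to show that the mutual information per channel use of this block-i.i.d.\ input, over a long horizon, is at least \(\I(p)/N-\mathrm{pen}_N(d)\). Concretely, fix \(k\ge1\), let \(X^{kN}=(X^{(1)},\dots,X^{(k)})\) with the \(X^{(t)}\) i.i.d.\ \(\sim p\), and let \(Y\) be the output of the BDC on the whole string. Let \(Y^{(t)}\) be the surviving bits of block \(t\) and \(M_t=|Y^{(t)}|\). Then \(Y\) is the concatenation \(Y^{(1)}\cdots Y^{(k)}\), and the tuple \((Y^{(1)},\dots,Y^{(k)})\) is a function of \((Y,M_1,\dots,M_k)\). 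Since \(\Cdel=\lim_n C_n(d)/n\) (Theorem~\ref{thm:sandwich}) and \(C_{kN}(d)\ge\I(X^{kN};Y)\), it suffices to lower bound \(\I(X^{kN};Y)/(kN)\).

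The key inequality is the chain rule with the block lengths as side information:
\[
\I(X^{kN};Y)\ \ge\ \I(X^{kN};Y,M_1^k)-\HH(M_1^k\mid Y)\ \ge\ \I\bigl(X^{kN};Y^{(1)},\dots,Y^{(k)}\bigr)-\HH(M_1^k).
\]
The first step uses \(\I(X;Y,M)=\I(X;Y)+\I(X;M\mid Y)\) and \(\I(X;M\mid Y)\le\HH(M\mid Y)\). The second step uses that \((Y,M_1^k)\) and \((Y^{(1)},\dots,Y^{(k)})\) are mutually determined, together with \(\HH(M_1^k\mid Y)\le\HH(M_1^k)\). Because deletions act independently across positions and the blocks are i.i.d., the pairs \((X^{(t)},Y^{(t)})\) are i.i.d.\ with joint law \(p(x)W_x(y)\). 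Hence \(\I(X^{kN};Y^{(1)},\dots,Y^{(k)})=k\,\I(p)\). Each \(M_t\sim\mathrm{Bin}(N,1-d)\), independently of the input, and the \(M_t\) are independent, so \(\HH(M_1^k)=k\,\HH(\mathrm{Bin}(N,1-d))\).

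Dividing by \(kN\) gives \(C_{kN}(d)/(kN)\ge\I(p)/N-\mathrm{pen}_N(d)\) for every \(k\), and letting \(k\to\infty\) yields \eqref{eq:fd-lb}. The endpoint cases \(d\in\{0,1\}\) need no separate treatment, since the argument never divides by \(d\) or \(1-d\).

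The main point to check is that the per-block outputs are genuinely independent across blocks and distributed as the block channel \(W\). Independence of the deletion events across positions makes this straightforward. I do not expect a real obstacle; the only care needed is in the direction of the inequalities in the side-information step, so that the penalty is paid as the entropy of the lengths, not their conditional entropy given \(X\).
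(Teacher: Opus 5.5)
Your proof is correct and is essentially the side-information argument behind Fertonani and Duman's Eq.~(39), which the paper cites without reproving. Revealing the per-block output lengths \(M_1^k\) turns the unsegmented channel on \(k\) i.i.d.\ blocks into \(k\) independent uses of the block channel, at a cost of at most \(\HH(M_1^k)=k\,\HH(\mathrm{Bin}(N,1-d))\) bits, and passing to the limit along \(kN\) is legitimate because \(C_n(d)/n\to\Cdel\) (Theorem~\ref{thm:sandwich}).
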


For fixed \(0<d<1\), the penalty behaves as \(\tfrac12\log_2(2\pi e\,Nd(1-d))/N\); it vanishes at
\(d\in\{0,1\}\), and for \(d=0.1\) it equals \(0.14\) at
\(N=16\) and \(0.003\) at \(N=2000\).
Lemma~\ref{lem:fd} therefore becomes useful only at long blocklengths, where \(\I(p)\) cannot be
computed exactly and an estimate with unknown bias is not sufficient.
The following two results show that the estimator~\eqref{eq:plain-mc} nevertheless yields bounds
that hold with a prescribed probability.

\begin{proposition}[Posterior-entropy form]
\label{prop:decomp}
Let \(p\) be a law on \(\{0,1\}^N\), let \((X,Y)\) be drawn from \(p(x)W_x(y)\), and set
\(f(x)=-\log_2p(x)\) and \(Z=-\log_2\Pr(X\mid Y)=f(X)-\log_2[W_X(Y)/q(Y)]\).
Then \(\I(p)=\HH(X^N)-\E Z\) and, almost surely,
\begin{equation}
\label{eq:Zrange}
0\ \le\ Z\ \le\ g\bigl(X,|Y|\bigr):=f(X)+\log_2\tbinom{N}{|Y|}.
\end{equation}
Moreover, \(|Y|\sim\mathrm{Bin}(N,1-d)\) is independent of \(X\).
\end{proposition}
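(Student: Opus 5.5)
The plan is to verify the identity first, then the two bounds on \(Z\), and finally the independence claim.

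\emph{Identity.} By Bayes' rule, \(\Pr(X=x\mid Y=y)=p(x)W_x(y)/q(y)\) whenever \(p(x)W_x(y)>0\). This holds almost surely under the joint law. Taking \(-\log_2\) gives
\[
Z=f(X)-\log_2\bigl[W_X(Y)/q(Y)\bigr].
\]
Taking expectations gives \(\E Z=\HH(X^N)-\I(p)\), because \(\E f(X)=\HH(X^N)\) and \(\E\log_2[W_X(Y)/q(Y)]=\I(p)\).

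\emph{Lower bound.} The bound \(Z\ge0\) holds because a posterior probability lies in \((0,1]\) almost surely.

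\emph{Upper bound.} For the upper bound in~\eqref{eq:Zrange}, it suffices to show \(W_X(Y)/q(Y)\ge 2^{-f(X)}\big/\binom{N}{|Y|}\) once \(f(X)\) is moved across; equivalently,
\[
\Pr(X=x\mid Y=y)\ge p(x)\Big/\tbinom{N}{|y|}.
\]
Let \(m=|y|\). By~\eqref{eq:kernel}, \(q(y)=\sum_{x'}p(x')\,\emb(y\hookrightarrow x')\,d^{N-m}(1-d)^m\). The classical fact \(\emb(y\hookrightarrow x')\le\binom Nm\) then gives
\[
q(y)\le\tbinom Nm d^{N-m}(1-d)^m.
\]
Almost surely \(W_x(y)>0\), so \(\emb(y\hookrightarrow x)\ge1\) and \(W_x(y)\ge d^{N-m}(1-d)^m\). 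Dividing the two bounds yields \(W_x(y)/q(y)\ge 1/\binom Nm\). Taking \(-\log_2\) of \(p(x)W_x(y)/q(y)\) then gives \(Z\le f(X)+\log_2\binom N{|Y|}\).

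\emph{Independence.} Conditioned on \(X=x\), the number of survivors is a sum of \(N\) independent Bernoulli\((1-d)\) indicators, whatever \(x\) is. Hence \(\Pr(|Y|=m\mid X=x)=\binom Nm(1-d)^md^{N-m}\), which does not depend on \(x\). This gives both the \(\mathrm{Bin}(N,1-d)\) law of \(|Y|\) and its independence from \(X\). The same value can be checked through the kernel identity \(\sum_{|y|=m}\emb(y\hookrightarrow x)=\binom Nm\).

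The only step requiring care is the upper bound. The key points there are that the embedding count of a realized pair is at least one almost surely, and that every embedding count is at most \(\binom Nm\). I also need to handle the degenerate cases \(d\in\{0,1\}\), where some factors vanish; there I would simply restrict all statements to the event \(p(X)W_X(Y)>0\), which has probability one.
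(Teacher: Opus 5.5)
Your proof is correct and follows the paper's argument essentially step for step: the identity via Bayes' rule and the information density, \(W_x(y)\ge d^{N-m}(1-d)^m\) from the existence of one embedding, \(q(y)\le\binom Nm d^{N-m}(1-d)^m\) from the embedding-count bound, and the independence of \(|Y|\) from \(X\) because the deletion pattern is drawn independently of the input. One slip in the sentence that sets up the upper bound: the sufficient condition should read \(W_X(Y)/q(Y)\ge 1/\binom{N}{|Y|}\), or equivalently \(p(X)W_X(Y)/q(Y)\ge 2^{-f(X)}/\binom{N}{|Y|}\) for the posterior, and not \(W_X(Y)/q(Y)\ge 2^{-f(X)}/\binom{N}{|Y|}\), which is weaker and would only give \(Z\le 2f(X)+\log_2\binom{N}{|Y|}\); your subsequent lines prove the correct inequality, so nothing breaks.
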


\begin{IEEEproof}
By definition, \(\E Z=\HH(X\mid Y)\) and \(\E f(X)=\HH(X^N)\), and therefore \(\I(p)=\HH(X^N)-\HH(X\mid Y)=\HH(X^N)-\E Z\).
The lower bound \(Z\ge0\) holds because \(\Pr(X\mid Y)\le1\).
For the upper bound, fix a realization \((x,y)\) with \(p(x)W_x(y)>0\), let \(m=|y|\), and write
\(t_m=d^{N-m}(1-d)^m\).
Because \(W_x(y)>0\), the string \(y\) has at least one embedding into \(x\);
consequently, \(W_x(y)\ge t_m\) by~\eqref{eq:kernel}.
Every embedding of \(y\) into a string \(x'\) is determined by the \(m\) positions it uses;
consequently, \(\emb(y\hookrightarrow x')\le\binom Nm\) and \(W_{x'}(y)\le\binom Nm t_m\) for every \(x'\);
averaging over \(x'\sim p\) gives \(q(y)\le\binom Nm t_m\).
Therefore
\[
\Pr(x\mid y)=\frac{p(x)W_x(y)}{q(y)}\ \ge\ \frac{p(x)\,t_m}{\binom Nm t_m}=\frac{p(x)}{\binom Nm},
\]
and taking \(-\log_2\) gives \(Z\le f(x)+\log_2\binom Nm\).
Finally, the deletion pattern is an i.i.d.\ Bernoulli\((d)\) sequence drawn independently of \(X\), and
\(|Y|\) is the number of undeleted positions; hence \(|Y|\sim\mathrm{Bin}(N,1-d)\) independently of
\(X\).
\end{IEEEproof}

\begin{theorem}[Confidence bounds from simulation]
\label{thm:conf}
Let \(n\ge2\), \(0<\delta<1\), let \(Z_1,\ldots,Z_n\) be i.i.d.\ copies of \(Z\) in
Proposition~\ref{prop:decomp}, let \(\tau>0\) be fixed independently of the samples,
and let \(\bar Z_\tau\) and \(V_\tau\) denote the sample mean and the unbiased sample variance of
\(\min\{Z_t,\tau\}\).
Define
\begin{align}
\varepsilon_n(\tau,\delta)&=\sqrt{\frac{2V_\tau\ln(2/\delta)}{n}}+\frac{7\tau\ln(2/\delta)}{3(n-1)},
\label{eq:eps}\\
t(\tau)&=\E\bigl[(g(X,|Y|)-\tau)^+\bigr].
\label{eq:tail}
\end{align}
Then, with probability at least \(1-\delta\),
\begin{multline}
\label{eq:ci}
\HH(X^N)-\bar Z_\tau-\varepsilon_n(\tau,\tfrac\delta2)-t(\tau)\ \le\ \I(p)\\
\le\ \HH(X^N)-\bar Z_\tau+\varepsilon_n(\tau,\tfrac\delta2),
\end{multline}
and, with probability at least \(1-\delta\),
\begin{equation}
\label{eq:cap-conf}
\Cdel\ \ge\ L_\delta:=\frac{\HH(X^N)-\bar Z_\tau-\varepsilon_n(\tau,\delta)-t(\tau)}{N}-\mathrm{pen}_N(d).
\end{equation}
\end{theorem}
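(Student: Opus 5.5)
The plan is to combine the decomposition of Proposition~\ref{prop:decomp} with a standard empirical Bernstein inequality applied to the truncated variables \(\min\{Z_t,\tau\}\). These lie in \([0,\tau]\) by \eqref{eq:Zrange}. I will use the Maurer--Pontil form: for i.i.d.\ variables in \([0,\tau]\) with sample mean \(\bar Z_\tau\) and unbiased sample variance \(V_\tau\), and with \(\mu_\tau:=\E\min\{Z,\tau\}\), each one-sided inequality
\[
\mu_\tau\le\bar Z_\tau+\varepsilon_n(\tau,\delta')\quad\text{and}\quad\mu_\tau\ge\bar Z_\tau-\varepsilon_n(\tau,\delta')
\]
holds with probability at least \(1-\delta'/2\). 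The constants \(\sqrt{2V\ln(2/\delta')/n}\) and \(7\tau\ln(2/\delta')/(3(n-1))\) are exactly those of that inequality, rescaled from \([0,1]\) to \([0,\tau]\). The hypothesis that \(\tau\) is fixed independently of the samples is what makes the truncated variables i.i.d.\ and bounded, so the inequality applies directly.

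Next I relate \(\mu_\tau\) to \(\E Z=\HH(X\mid Y)\). Since \(\min\{Z,\tau\}\le Z\), we have \(\mu_\tau\le\E Z\). In the other direction,
\[
Z-\min\{Z,\tau\}=(Z-\tau)^+\le(g(X,|Y|)-\tau)^+
\]
by the almost-sure upper bound \(Z\le g\) of \eqref{eq:Zrange} and monotonicity of \(u\mapsto(u-\tau)^+\). Taking expectations gives \(\E Z\le\mu_\tau+t(\tau)\). The quantity \(t(\tau)\) is computable exactly, because \(g\) depends only on \(f(X)\) and \(|Y|\), and \(|Y|\sim\mathrm{Bin}(N,1-d)\) is independent of \(X\). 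That fact is not needed for the proof, but it explains the definition of \(t(\tau)\).

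To obtain \eqref{eq:ci}, use \(\I(p)=\HH(X^N)-\E Z\) together with the sandwich \(\mu_\tau\le\E Z\le\mu_\tau+t(\tau)\). The upper bound on \(\I(p)\) needs a lower bound on \(\E Z\ge\mu_\tau\ge\bar Z_\tau-\varepsilon_n\). The lower bound on \(\I(p)\) needs an upper bound on \(\E Z\le\mu_\tau+t(\tau)\le\bar Z_\tau+\varepsilon_n+t(\tau)\). Both directions are needed simultaneously, so I apply the two-sided version with \(\delta'=\delta/2\) (each tail at \(\delta/4\), total at most \(\delta\)), which gives \(\varepsilon_n(\tau,\delta/2)\) on both sides.

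For \eqref{eq:cap-conf}, only the lower bound on \(\I(p)\) is required, hence only the one-sided upper deviation of \(\bar Z_\tau\). Taking \(\delta'=\delta\) in the one-sided statement gives failure probability \(\delta/2\le\delta\), with margin \(\varepsilon_n(\tau,\delta)\). Dividing by \(N\) and applying Lemma~\ref{lem:fd} yields \(L_\delta\).

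The main obstacle is bookkeeping rather than mathematics. The union-bound allocation has to match the stated \(\varepsilon_n(\tau,\delta/2)\) and \(\varepsilon_n(\tau,\delta)\), the constants of the empirical Bernstein bound have to be checked with \(n\ge2\) and the range \(\tau\), and truncation must come out one-sided, so that \(t(\tau)\) appears only in the lower bound on \(\I(p)\).
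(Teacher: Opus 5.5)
Your argument is the paper's own proof. You truncate at \(\tau\), rescale the Maurer--Pontil empirical Bernstein bound from \([0,1]\) to \([0,\tau]\), and use \(Z\le g\) to get \(\E\min\{Z,\tau\}\le\E Z\le\E\min\{Z,\tau\}+t(\tau)\). You then take a union over the two tails with \(\delta'=\delta/2\) for \eqref{eq:ci}, and use a single tail with \(\delta'=\delta\) plus Lemma~\ref{lem:fd} for \eqref{eq:cap-conf}.

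There is one bookkeeping error, and it sits exactly where you located the crux. Maurer--Pontil's Theorem~4, with margin \(\ln(2/\delta')\), fails with probability at most \(\delta'\) for a \emph{single} side, not \(\delta'/2\). The 2 inside the logarithm pays for the internal union of Bernstein's inequality with the concentration of the sample variance, not for two tails. So ``each tail at \(\delta/4\)'' should read \(\delta/2\) per tail, for a total of \(\delta\). Likewise ``failure probability \(\delta/2\)'' in the capacity bound should read \(\delta\). With these corrected counts, your choices of \(\delta'\) give exactly the stated confidence \(1-\delta\) in both claims, so the conclusion stands, but without the slack you thought you had.

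The lower-tail inequality is also not part of the cited theorem. As the paper does, obtain it by applying the same bound to \(\tau-\min\{Z_t,\tau\}\), which lies in \([0,\tau]\) and has the same sample variance \(V_\tau\).
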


\begin{IEEEproof}
Let \(U_t=\min\{Z_t,\tau\}/\tau\), which are i.i.d.\ and take values in \([0,1]\), and let
\(\mu=\E U_1\).
The empirical Bernstein inequality of Maurer and Pontil~\cite[Thm.~4]{MaurerPontil2009} states
that, for i.i.d.\ variables in \([0,1]\) with sample mean \(\bar U\) and unbiased sample variance
\(V_U\), and for any \(\delta'\in(0,1)\),
\[
\Pr\Bigl\{\mu>\bar U+\sqrt{2V_U\ln(2/\delta')/n}+7\ln(2/\delta')/\bigl(3(n-1)\bigr)\Bigr\}\le\delta'.
\]
Since \(\bar U=\bar Z_\tau/\tau\) and \(V_U=V_\tau/\tau^2\), multiplying the deviation by \(\tau\) shows that
\(\E\min\{Z,\tau\}\le\bar Z_\tau+\varepsilon_n(\tau,\delta')\) except with probability at most
\(\delta'\).
Applying the same inequality to \(1-U_t\), which is also i.i.d.\ in \([0,1]\) with the same sample
variance, gives \(\E\min\{Z,\tau\}\ge\bar Z_\tau-\varepsilon_n(\tau,\delta')\) except with probability at
most \(\delta'\).
With \(\delta'=\delta/2\) and the union bound, both inequalities hold simultaneously with probability
at least \(1-\delta\).

Next we relate \(\E\min\{Z,\tau\}\) to \(\E Z\).
Since \(Z\ge0\), we have \(Z=\min\{Z,\tau\}+(Z-\tau)^+\).
The map \(z\mapsto(z-\tau)^+\) is nondecreasing and \(Z\le g(X,|Y|)\) by~\eqref{eq:Zrange};
consequently, \(0\le(Z-\tau)^+\le(g(X,|Y|)-\tau)^+\).
Taking expectations,
\(\E\min\{Z,\tau\}\le\E Z\le\E\min\{Z,\tau\}+t(\tau)\).
On the event of probability at least \(1-\delta\) above, this yields
\(\bar Z_\tau-\varepsilon_n(\tau,\tfrac\delta2)\le\E Z\le\bar Z_\tau+\varepsilon_n(\tau,\tfrac\delta2)+t(\tau)\),
and substituting into \(\I(p)=\HH(X^N)-\E Z\) (Proposition~\ref{prop:decomp}) gives~\eqref{eq:ci}.
For~\eqref{eq:cap-conf}, only the upper bound on \(\E\min\{Z,\tau\}\) is needed; applying it with
\(\delta'=\delta\) gives \(\I(p)\ge\HH(X^N)-\bar Z_\tau-\varepsilon_n(\tau,\delta)-t(\tau)\) with probability
at least \(1-\delta\), and Lemma~\ref{lem:fd} completes the proof.
\end{IEEEproof}

For the inputs considered in this paper, every quantity in Theorem~\ref{thm:conf} other than
\(\bar Z_\tau\) and \(V_\tau\) is explicit.
Entropies use the convention \(0\log0=0\), and \(f\), \(Z\), and \(g\) are evaluated on the support of
\(p\), which is where the bounds of Proposition~\ref{prop:decomp} hold almost surely.
For an ORD law, \(f(x)=\log_2(|R_\ell|/w_\ell)\) on \(R_\ell\) and
\(\HH(X^N)=\sum_\ell w_\ell\log_2(|R_\ell|/w_\ell)\).
For the symmetric Markov input of Proposition~\ref{prop:markov}(b),
\(f(x)=1+F\log_2\frac1p+(N-1-F)\log_2\frac1{1-p}\), where \(F\sim\mathrm{Bin}(N-1,p)\) is the number
of bit flips, and \(\HH(X^N)=1+(N-1)h_2(p)\) with \(h_2\) the binary entropy function.
Because \(|Y|\) is independent of \(X\), the law of \(g\) is a product of two explicit laws, and
\(t(\tau)\) is a finite sum.
We take \(\tau\) as the smallest support point of \(g\) with \(t(\tau)\le10^{-9}N\); in our
experiments this gives \(0.7\le\tau/N\le2.6\).
The truncation is essential, because the worst-case value of \(g\) can be several times larger and
the range enters~\eqref{eq:eps} linearly.

The guarantee refers only to the randomness of the simulation.
In every experiment the input, the blocklength \(N\), the sample size \(n\), and \(\delta\) were fixed
before sampling, and samples used to select an input were discarded.
All reported intervals and bounds use \(\delta=10^{-3}\); by the union bound, any \(K\) of them hold
simultaneously with probability at least \(1-K\delta\).
Pseudo-random numbers (PCG64) are treated as i.i.d.\ uniform, as is customary.
The recursions run in double precision with per-step rescaling, and each endpoint is widened by
\(10^{-9}\)~bits/use; spot checks against 80-bit extended precision at \(N=2000\) differed by less
than \(10^{-14}\)~bits/use.
This allowance is an empirical safeguard, not a proven error bound, and the stated coverage is
conditional on it.
The input parameters, seeds, \(\tau\), \(t(\tau)\), and unrounded endpoints of every interval are
provided with the supplementary code.
With \(n=10^5\) samples at \(N=2000\), the one-sided margin \(\varepsilon_n/N\) is a few
\(10^{-4}\)~bits/use.

For the Markov input at \(N=2000\) and \(d=0.1\) with \(n=2\times10^5\), the two terms of
\(\varepsilon_n/N\) are \(1.8\times10^{-4}\) and \(1.4\times10^{-4}\) (\(\tau/N=1.56\)), whereas Hoeffding's
inequality with the same range would give \(6.5\times10^{-3}\).

\section{Variational Estimators as Learning Objectives}
\label{sec:framework}

A common approach in learned communication is to maximize a variational MI estimate over both the
input and a critic network.
This section describes that approach for the ORD family; Section~\ref{sec:est-cmp} shows that it
fails beyond \(N\approx32\), which motivates Section~\ref{sec:transfer}.

Within the ORD family, the run-class weights are parameterized as \(w=\mathrm{softmax}(\lambda)\), and an
input is sampled by drawing \(\ell\sim w\) and then \(x\) uniformly from \(R_\ell\).
Let \(\hat\I_\theta\) denote a variational estimate with critic parameters \(\theta\).
The joint problem \(\max_{\lambda,\theta}\hat\I_\theta\) is not equivalent to \(\max_\lambda\I\):
even a population lower bound has an input-dependent gap, and a critic with constant scores makes
InfoNCE vanish for every input.
For this reason, every rate reported in this paper is computed exactly or bounded by
Theorem~\ref{thm:conf}; variational estimates are used only as optimization signals and are labeled
as such.

\label{sec:estimators}
Let \(S_{ij}=s(x_i,y_j)\) be the critic scores, in nats, on a batch of \(B\) joint pairs.
We consider
\begin{align}
\hat I_{\mathrm{NCE}}&=\frac1B\sum_{i}\ln\frac{e^{S_{ii}}}{\frac1B\sum_{j} e^{S_{ij}}},
\label{eq:nce}\\
\hat I_{\mathrm{DV}}&=\frac1B\sum_i S_{ii}-\ln\Bigl(\frac1{B^2}\sum_{i,j} e^{S_{ij}}\Bigr),
\label{eq:dv}\\
\hat I_{\mathrm{NWJ}}&=\frac1B\sum_i S_{ii}-\frac1{B^2}\sum_{i,j} e^{S_{ij}-1},
\label{eq:nwj}
\end{align}
and SMILE, which replaces \(e^{S_{ij}}\) in~\eqref{eq:dv} by \(e^{\mathrm{clip}(S_{ij},-\tau,\tau)}\) with
\(\tau=5\)~\cite{Oord2018InfoNCE,Belghazi2018,Nguyen2010f,SongErmon2019}.
By Jensen's inequality and \(\ln a\le a/e\), pointwise
\(\hat I_{\mathrm{NWJ}}\le\hat I_{\mathrm{DV}}\le\hat I_{\mathrm{NCE}}\le\ln B\); hence, for a fixed critic on
fresh pairs, all three are lower bounds in expectation, although realizations can exceed the mutual
information, and InfoNCE is capped by \(\log_2B\)~bits~\cite{Poole2019variational}.
Clipped SMILE is not a lower bound even in population~\cite{SongErmon2019}.
Each complete objective is divided by \(\ln2\), and the product terms
in~\eqref{eq:dv}--\eqref{eq:nwj} average over all \(B^2\) pairs, a fraction \(1/B\) of which are joint.

The critic is a two-tower network: a two-layer GELU network maps \(X\) and its run features to
\(\mathbb{R}^{128}\), a bidirectional GRU (hidden size \(64\)) embeds the padded \(Y\) and its length, and a
three-layer network scores the two embeddings, their difference, and their product.

For \(N\le8\), the exact kernel is available, and we use a hybrid procedure.
The critic is trained for 50 epochs on batches drawn from the current \(w\), and every fifth epoch
\(\lambda\) takes five Adam steps along the central-difference gradient (logit step \(0.05\)) of the
exactly evaluated objective \(\I(p_{w(\lambda)})/N+0.02\,\HH(w)\); this exact-objective ascent supplies inputs of known rate on which
the estimators can be calibrated.

For larger \(N\), \(\lambda\) and \(\theta\) are trained jointly by AdamW, and the \(\lambda\)-gradient is the
batch score-function estimate
\begin{equation}
\label{eq:rf}
\hat g=\bigl(\hat I_B-b\bigr)\,\frac1B\sum_{k=1}^B\nabla_\lambda\log w_{\ell_k},
\end{equation}
augmented by a small entropy bonus, where \(\hat I_B\) is the InfoNCE value of the batch and \(b\) is an
exponential moving average of earlier batch values.
For a fixed critic, and since \(b\) does not depend on the current batch, \(B\hat g\) is an unbiased
estimate of \(\nabla_\lambda\E[\hat I_B]\); it ignores the adaptation of \(\theta\) and targets the
surrogate rather than \(\I\), and the factor \(1/B\) reduces the weight of this term relative to the
entropy bonus.

\section{Transfer and Exact-Gradient Learning at Large \(N\)}
\label{sec:transfer}

For \(N\le16\), certified ORD optima are computable, whereas beyond that length neither BA nor
class-averaged kernels are practical, and the score-function procedure of
Section~\ref{sec:framework} fails (Section~\ref{sec:scale}).
We therefore consider two alternatives that use the exact marginal of
Proposition~\ref{prop:marginal}: transferring the shape of small-\(N\) optima to a target length,
and learning the weights directly from unbiased estimates of the exact gradient.

In profile transfer, the \(\ell\)-th weight of a length-\(N\) ORD input is placed at the normalized run
index \(u_\ell=(\ell-\tfrac12)/N\in(0,1)\).
For each \(d\in\{0.1,0.2,\ldots,0.9\}\) and \(N\in\{4,6,8,10\}\), near-optimal weights \(w^{(N,d)}\)
were obtained by 80 Adam steps along the central-difference gradient of the exactly evaluated rate
with entropy weight \(0.005\); for \(d\in\{0.1,0.3,0.5\}\), each is within \(3\times10^{-4}\)~bits/use
of the certified optimum.
The profile \(f_d\) is obtained by linearly interpolating the masses \(w^{(N,d)}_\ell\) at \(u_\ell\) onto
200 equispaced points of \([10^{-3},1]\), extending them as constants beyond the extreme midpoints,
averaging the four interpolants, flooring at \(10^{-12}\), and normalizing to unit integral by the
trapezoidal rule; \(f_d\) is the linear interpolant of these values, extended as a constant on
\([0,10^{-3}]\).
Given \(f_d\) and a target length \(N'\), the transferred input is
\begin{equation}
\label{eq:transfer}
w'_\ell \propto \max\bigl\{f_d\bigl((\ell-\tfrac12)/N'\bigr),10^{-12}\bigr\},\qquad
\textstyle\sum_{\ell=1}^{N'} w'_\ell=1,
\end{equation}
which is a valid ORD input at length \(N'\).
As \(d\) increases, \(f_d\) moves toward smaller \(u\), that is, toward fewer and longer runs.
For a fixed profile the peak mass decays as \(\Theta(1/N')\) (from about \(0.12\) at \(N'=16\) to
\(0.004\) at \(N'=512\)), and therefore the effective number of classes \(1/\sum_\ell w_\ell'^2\) grows linearly
in \(N'\).

Exact-gradient learning, the second alternative, relies on the fact that sampling \(X\) uniformly
from \(R_\ell\) in~\eqref{eq:plain-mc} gives unbiased estimates of \(s_\ell(w)\), and hence of the exact
ORD gradient \(s_\ell-1/\ln2\).
We use them in the stochastic mirror-ascent update
\begin{equation}
\label{eq:smd}
w_\ell\leftarrow w_\ell\,2^{\eta\tilde s_\ell}\Big/\sum_k w_k2^{\eta\tilde s_k},\qquad \eta=0.5,
\end{equation}
which reduces to the BA update when \(\tilde s=s\) and \(\eta=1\).
Here \(\tilde s\) is a degree-8 Chebyshev fit in \(u_\ell\) to per-class estimates from \(n\) samples per
class (\(n=40\) for \(N\le64\) and \(n=20\) for \(N=128\)); the fit exploits the smoothness of
\(s_\ell\) in \(\ell\) to reduce variance.
The fitted \(\tilde s\) is in general biased, so~\eqref{eq:smd} is a heuristic without a convergence
guarantee; the reported rates are evaluated independently.
Starting from the transferred input, we run 25 iterations for \(N\le64\) and 14 for \(N=128\).
At \(N=16\), the procedure returns an input within \(3\times10^{-4}\)~bits/use of the certified
optimum.
Final rates are evaluated on fresh samples, independent of those used for learning.

\section{Numerical Results for Finite-Block Rates}
\label{sec:expts}

This section evaluates the families and procedures described above.
Throughout, rates are either exact (up to floating-point error) or reported as \(99.9\%\) confidence
intervals from Theorem~\ref{thm:conf} (\(\delta=10^{-3}\)); lower ends are rounded down and upper ends up.

\subsection{Certified rates for \(N\le16\)}
\label{sec:exact-rates}
Table~\ref{tab:exact} reports certified ORD and BA rates.
Through \(N=10\), the largest relative gap between BA and ORD is \(1.96\%\) (\(d=0.8\)), and it is
below \(0.3\%\) for \(d\le0.5\).
ORD improves substantially on RLD; at \(N=16\), for instance, the certified ORD rates are \(0.6926\),
\(0.3651\), and \(0.2145\)~bits/use at \(d=0.1,0.3,0.5\), against \(0.5900\), \(0.3058\), and \(0.1562\)
for RLD.
The inputs returned by the hybrid procedure of Section~\ref{sec:framework} are within
\(1.7\times10^{-3}\)~bits/use of the ORD optimum.

Proposition~\ref{prop:markov}(b) identifies the symmetric Markov inputs as a one-parameter
subfamily of ORD.
Table~\ref{tab:markov} shows that the best such input is within \(0.0013\)~bits/use of the ORD optimum
at every tested \((N,d)\), and within \(2.2\%\) of \(\CN\) at \(N=10\).
To this accuracy, the remaining \(N-2\) degrees of freedom of ORD are not used by the optimum: a
single flip probability \(p^\star(N,d)\) captures almost all of the finite-block capacity, and ORD
serves to certify this.

\begin{table}[t]
\caption{Best symmetric Markov input (one parameter, exact rate) versus numerically certified ORD
and BA optima (bits/use).}
\label{tab:markov}
\centering
\scriptsize
\begin{tabular}{@{}cccccc@{}}
\toprule
$N$ & $d$ & $p^\star$ & Markov$(p^\star)$ & ORD & BA \\
\midrule
10 & 0.1 & 0.438 & 0.7295 & 0.7297 & 0.7301 \\
10 & 0.2 & 0.366 & 0.5404 & 0.5411 & 0.5421 \\
10 & 0.3 & 0.290 & 0.4086 & 0.4097 & 0.4108 \\
10 & 0.4 & 0.222 & 0.3153 & 0.3163 & 0.3170 \\
10 & 0.5 & 0.165 & 0.2465 & 0.2472 & 0.2476 \\
10 & 0.6 & 0.119 & 0.1926 & 0.1930 & 0.1942 \\
10 & 0.7 & 0.080 & 0.1476 & 0.1479 & 0.1500 \\
10 & 0.8 & 0.043 & 0.1073 & 0.1075 & 0.1096 \\
10 & 0.9 & 0.009 & 0.0666 & 0.0666 & 0.0670 \\
\midrule
16 & 0.1 & 0.435 & 0.6924 & 0.6926 & -- \\
16 & 0.3 & 0.280 & 0.3638 & 0.3651 & -- \\
16 & 0.5 & 0.156 & 0.2133 & 0.2145 & -- \\
\bottomrule
\end{tabular}

\end{table}

\begin{table}[t]
\caption{Numerically certified ORD and BA rates, RLD, and the exact rate of the hybrid learner's
output (bits/use).}
\label{tab:exact}
\centering
\scriptsize
\setlength{\tabcolsep}{3pt}
\begin{tabular}{@{}ccccccc@{}}
\toprule
$N$ & $d$ & ORD & RLD & BA & learned & ORD$-$learned \\
\midrule
4 & 0.1 & 0.8054 & 0.7698 & 0.8055 & 0.8052 & 0.0003 \\
4 & 0.3 & 0.5183 & 0.4843 & 0.5184 & 0.5181 & 0.0002 \\
4 & 0.5 & 0.3323 & 0.2776 & 0.3323 & 0.3315 & 0.0008 \\
6 & 0.1 & 0.7724 & 0.7159 & 0.7725 & 0.7716 & 0.0007 \\
6 & 0.3 & 0.4674 & 0.4250 & 0.4677 & 0.4667 & 0.0007 \\
6 & 0.5 & 0.2908 & 0.2328 & 0.2909 & 0.2893 & 0.0015 \\
8 & 0.1 & 0.7483 & 0.6774 & 0.7485 & 0.7479 & 0.0004 \\
8 & 0.3 & 0.4339 & 0.3861 & 0.4345 & 0.4328 & 0.0011 \\
8 & 0.5 & 0.2651 & 0.2062 & 0.2654 & 0.2634 & 0.0017 \\
\midrule
10 & 0.1 & 0.7297 & 0.6481 & 0.7301 & -- & -- \\
10 & 0.3 & 0.4097 & 0.3581 & 0.4108 & -- & -- \\
10 & 0.5 & 0.2472 & 0.1882 & 0.2476 & -- & -- \\
10 & 0.7 & 0.1479 & 0.0848 & 0.1500 & -- & -- \\
10 & 0.8 & 0.1075 & 0.0488 & 0.1096 & -- & -- \\
10 & 0.9 & 0.0666 & 0.0208 & 0.0670 & -- & -- \\
\midrule
15 & 0.1 & 0.6974 & 0.5976 & -- & -- & -- \\
15 & 0.3 & 0.3707 & 0.3124 & -- & -- & -- \\
15 & 0.5 & 0.2186 & 0.1602 & -- & -- & -- \\
16 & 0.1 & 0.6926 & 0.5900 & -- & -- & -- \\
16 & 0.3 & 0.3651 & 0.3058 & -- & -- & -- \\
16 & 0.5 & 0.2145 & 0.1562 & -- & -- & -- \\
\bottomrule
\end{tabular}
\end{table}

\subsection{Variational estimators}
\label{sec:est-cmp}
To separate estimation from input optimization, we fix the hybrid output at \(N=8\), \(d=0.3\), whose
exact rate is \(0.4328\)~bits/use, train each critic for 40 epochs with \(B=1024\), and evaluate on fresh
batches.
All four estimates lie below the exact rate; InfoNCE, DV, and NWJ are lower bounds in expectation,
whereas SMILE is not.
The estimates (mean \(\pm\) standard deviation over evaluation batches) are \(0.317\pm0.007\) for
InfoNCE, \(0.310\pm0.007\) for DV, \(0.307\pm0.008\) for SMILE, and \(0.112\pm0.002\) for NWJ.
InfoNCE therefore has the smallest error (\(-0.116\)~bits/use), DV and SMILE are close to it, and NWJ
is the most conservative (\(-0.321\)); InfoNCE is used at large \(N\).

\label{sec:scale}
At larger blocklengths, InfoNCE serves as the learning objective.
Training \(\lambda\) and the InfoNCE critic jointly with~\eqref{eq:rf} for \(N\in\{32,64,128\}\)
(\(B=2048,2048,1024\)) leaves the weights at RLD (\(\max_\ell|Nw_\ell-1|\le0.012\)), whose rates in
Table~\ref{tab:transfer-exactq} the InfoNCE values underestimate by a factor of \(2.5\)--\(5\) at
\(N=32\) and by more than \(100\) at \(N=128\) (estimates of at most \(0.0011\)~bits/use).

This failure worsens as \(N\) grows.
For any critic, the InfoNCE value of a batch of size \(B\) is at most \(\log_2B\)~bits, i.e.,
\(\log_2B/N\)~bits/use~\cite{Poole2019variational}, whereas \(\I(X^N;Y)\) grows linearly in \(N\);
avoiding this ceiling requires \(B\ge2^{\I(X^N;Y)}\) (a necessary, not sufficient, condition).
Similarly, no distribution-free high-confidence lower bound computed from \(n\) samples alone exceeds
\(O(\log n)\)~\cite{McAllester2020formal}.
To separate this limit from critic training, we evaluated the output-anchored form of~\eqref{eq:nce},
\(\frac1B\sum_i\log_2\bigl[W_{x_i}(y_i)/\bigl(\frac1B\sum_jW_{x_j}(y_i)\bigr)\bigr]\), i.e., the
oracle critic \(\ln W_x(y)\) with the roles of inputs and outputs exchanged; this critic is optimal for
that form, because the posterior probability that \(x_j\) produced \(y_i\) is proportional to
\(W_{x_j}(y_i)\)~\cite{Poole2019variational}, and the \(\log_2B\) ceiling still applies.
For each \((N,d)\) we used batch sizes that grow with the blocklength, \(B=16N\) and \(B=64N\), a
fixed Markov input (flip probability \(0.438\), \(0.366\), \(0.247\), \(0.189\), \(0.140\), \(0.101\),
\(0.080\), \(0.049\), and \(0.010\) for \(d=0.1,\ldots,0.9\), selected at \(N=100\)), and between \(48\) and
\(8192\) independent batch rows, and we compared the result with the rate of the same input
estimated by~\eqref{eq:plain-mc}.
Fig.~\ref{fig:nce}(a) shows that, for every \(d\), the fraction of the rate recovered by the oracle
InfoNCE is close to one at short blocklengths and then decays approximately as \(\log_2B/\I(X^N;Y)\):
with \(B=64N\) it is \(0.17\) at \(N=128\) and \(0.05\) at \(N=512\) for \(d=0.1\), \(0.25\) at
\(N=512\) for \(d=0.5\), and with \(B=16N\) it drops to \(0.52\) at \(N=2048\) even for \(d=0.9\),
where the rate is only \(0.014\)~bits/use.
Fig.~\ref{fig:nce}(b) shows that all tested \((N,d,B)\) lie close to the envelope
\(\min\{1,\log_2B/\I(X^N;Y)\}\): the decay sets in, for every \(d\), once the block information exceeds
\(\log_2B\), which happens at larger \(N\) for larger \(d\) because the rate per symbol is smaller;
the envelope is an upper bound, not an exact law.

Input learning, however, needs the objective to rank inputs.
Fig.~\ref{fig:nce}(c) reports the fraction of the true gap between the best Markov input and RLD
that is visible to the oracle InfoNCE.
It is close to one at short blocklengths and then falls: at \(N=256\) it is at most \(0.04\) for
\(d\le0.3\), between \(0.11\) and \(0.28\) for \(d\in\{0.4,0.5\}\), and below \(0.5\) for \(d=0.6\), and at
\(N=512\) it is \(0.34\) and \(0.57\) for \(d=0.7\) and \(0.8\).
Empirically, once both tested inputs reach the ceiling, the oracle objective resolves only a small
fraction of the gap between them; the ceiling itself bounds each value, not their difference.
For \(d=0.9\), the block information of RLD is only about \(1\)~bit at \(N=256\), far below
\(\log_2B\), and the gap is still resolved; by Fig.~\ref{fig:nce}(b), we expect it to be lost once
this information exceeds \(\log_2B\), which for \(d=0.9\) happens only at blocklengths of several
thousand.
This is consistent with the learned weights above, which remain at RLD for
\(d\in\{0.1,0.3,0.5\}\).
The exact marginal of Proposition~\ref{prop:marginal} avoids this limit because it evaluates
\(\log P(y)\) directly rather than from a batch; the methods of Section~\ref{sec:transfer} use it.

\begin{figure*}[!t]
\centering
\includegraphics[width=\textwidth]{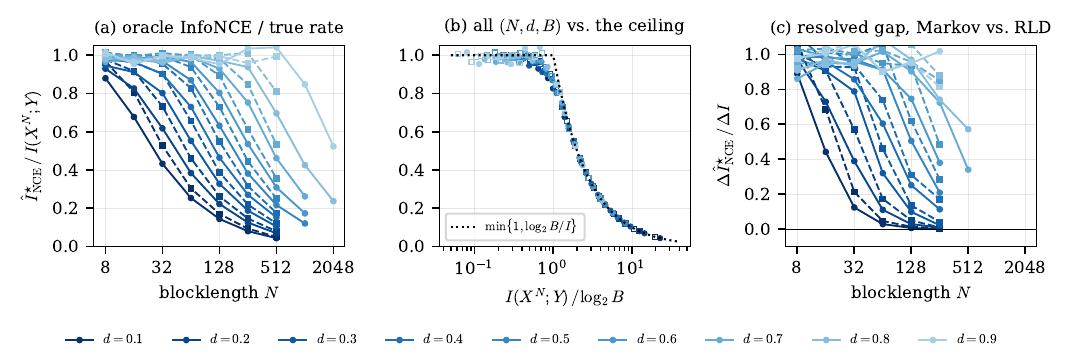}
\caption{Output-anchored InfoNCE with the oracle critic $\ln W_x(y)$ and batch size growing with the
blocklength ($B=16N$, circles and solid lines; $B=64N$, squares and dashed lines), for a fixed
Markov input and all nine values of $d$ (darker: smaller $d$).
(a)~Fraction of the rate $\I(X^N;Y)$ recovered.
(b)~The same points against $\I(X^N;Y)/\log_2B$, with the envelope $\min\{1,\log_2B/\I\}$ implied by
the InfoNCE ceiling~\cite{Poole2019variational}.
(c)~Fraction of the true gap $\Delta I=\I(p_{\rm Markov})-\I(p_{\rm RLD})$ resolved by the oracle
InfoNCE ($N\le256$; $N=512$ for $d\in\{0.7,0.8\}$).
True rates are the unbiased estimates~\eqref{eq:plain-mc} from $3000$ to $20{,}000$ samples.}
\label{fig:nce}
\end{figure*}

\subsection{Transfer and exact-gradient learning}
\label{sec:transfer-expts}
At \(N'=16\), where the class-averaged kernel is still available, transfer improves on RLD by
\(\Delta:=\I(p_{w'})/N'-\I(p_{\mathrm{RLD}})/N'=0.078\), \(0.049\), and \(0.053\)~bits/use at
\(d=0.1,0.3,0.5\), and it is within \(0.025\), \(0.010\), and \(0.006\) of the certified optimum.
These values also validate Theorem~\ref{thm:conf}: the \(99.9\%\) intervals computed from
\(n=20{,}000\) simulated samples contain all six exact rates.

For \(32\le N'\le512\), Table~\ref{tab:transfer-exactq} and Fig.~\ref{fig:delta} report
intervals from \(n=20{,}000\) samples (\(N'\le64\)), \(10{,}000\) (\(N'=128\)), \(6000\) (\(N'=256\)), and
\(4000\) (\(N'=512\)).
Subtracting unrounded endpoints gives an interval for \(\Delta\); since each endpoint fails with
probability at most \(\delta/2\), each one-sided bound on \(\Delta\) holds with probability
\(99.9\%\) and the two-sided interval with \(99.8\%\).
The point estimates of \(\Delta\) lie in \([0.085,0.093]\), \([0.045,0.057]\), and \([0.032,0.045]\)~bits/use
at \(d=0.1\), \(0.3\), and \(0.5\).
The lower bounds confirm \(\Delta>0\) at every tested length except \(N'=512\), \(d=0.5\), where \(4000\)
samples leave an interval containing zero; at \(d=0.1\) they give \(\Delta\ge0.063\) up to
\(N'=512\).
In particular, the plateau at \(d=0.1\) reported in~\cite{KhodaiemehrFeng2026arXiv} does not occur
(Section~\ref{sec:accuracy}).

Learning with~\eqref{eq:smd} improves further on transfer (Table~\ref{tab:learned}).
At \(d=0.1\), the point gains are \(0.045\), \(0.060\), and \(0.069\)~bits/use for \(N=32,64,128\), and the
intervals guarantee gains of at least \(0.031\), \(0.049\), and \(0.057\).
At \(d\in\{0.3,0.5\}\), the point gains are \(0.011\)--\(0.032\)~bits/use and are confirmed by the
intervals for \(N\ge64\) at \(d=0.3\) and for \(N=128\) at \(d=0.5\).
The gain grows with \(N\) because the learned weights, unlike the transferred ones, concentrate
: as a distribution of the run fraction \(u=(\ell-\tfrac12)/N\), they are centred
at \(u^\star(d)\approx0.43\), \(0.27\), \(0.145\), nearly independently of \(N\), with a standard deviation
between \(0.50/\sqrt N\) and \(0.61/\sqrt N\) for \(N\in\{32,64,128\}\), and the effective number of classes
\(1/\sum_\ell w_\ell^2\) grows from \(8\)--\(12\) at \(N=32\) to \(20\)--\(24\) at \(N=128\).
The effective number of active classes therefore grows like \(\sqrt N\), whereas a fixed profile
spreads its mass over \(\Theta(N)\) classes.
This is the same order of concentration as for a Markov input, whose run count is
\(1+\mathrm{Bin}(N-1,p)\), so that its run fraction has standard deviation
\(\sqrt{(N-1)p(1-p)}/N\approx(0.35\)--\(0.50)/\sqrt N\) for the relevant \(p\).
Accordingly, the intervals of the learned input and of the best tested Markov input on a five-point
grid of flip probabilities overlap at every \((N,d)\) in Table~\ref{tab:learned} (e.g., their
difference lies in \([-0.009,0.013]\) at \(N=128\), \(d=0.5\)).
Overlap does not establish equality, and no learned input bounds \(\CORD\) from above; with
Table~\ref{tab:markov}, these results show only that, for \(N\le128\), no tested run-count input
improves measurably on a first-order Markov input.

\begin{figure}[t]
\centering
\includegraphics[width=0.75\columnwidth]{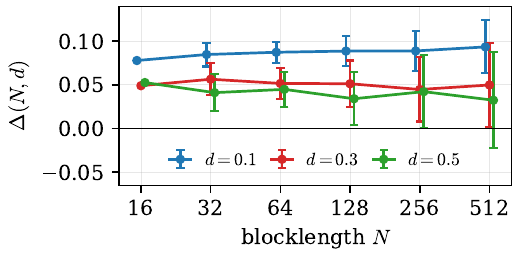}
\caption{Gain of transfer over RLD, $\Delta(N,d)$ (bits/use): exact at $N=16$, and $99.8\%$ confidence
intervals (each end at $99.9\%$; Theorem~\ref{thm:conf}) for $N\ge32$. The point estimates stay
positive and roughly constant in $N$; every interval excludes zero except at $N=512$, $d=0.5$.}
\label{fig:delta}
\end{figure}

\begin{table}[t]
\caption{Transferred ORD versus RLD (bits/use).
Rows $N=16$ are exact; other rows are $99.9\%$ confidence intervals from Theorem~\ref{thm:conf}
(lower ends rounded down, upper ends rounded up); the interval for $\Delta$ is obtained by
subtracting unrounded endpoints and has confidence $99.8\%$ ($99.9\%$ for each end).}
\label{tab:transfer-exactq}
\centering
{\scriptsize
\setlength{\tabcolsep}{2.5pt}
\begin{tabular}{@{}ccccc@{}}
\toprule
$N$ & $d$ & transfer & RLD & $\Delta$ \\
\midrule
16 & 0.1 & 0.668 & 0.590 & $+0.078$ \\
32 & 0.1 & [0.597, 0.611] & [0.512, 0.527] & $[+0.071, +0.099]$ \\
64 & 0.1 & [0.549, 0.562] & [0.462, 0.475] & $[+0.075, +0.100]$ \\
128 & 0.1 & [0.516, 0.533] & [0.427, 0.446] & $[+0.071, +0.106]$ \\
256 & 0.1 & [0.493, 0.516] & [0.402, 0.428] & $[+0.065, +0.113]$ \\
512 & 0.1 & [0.481, 0.510] & [0.386, 0.418] & $[+0.063, +0.124]$ \\
\midrule
16 & 0.3 & 0.355 & 0.306 & $+0.049$ \\
32 & 0.3 & [0.290, 0.309] & [0.233, 0.253] & $[+0.038, +0.075]$ \\
64 & 0.3 & [0.245, 0.264] & [0.193, 0.212] & $[+0.033, +0.070]$ \\
128 & 0.3 & [0.216, 0.243] & [0.164, 0.193] & $[+0.024, +0.078]$ \\
256 & 0.3 & [0.187, 0.224] & [0.141, 0.180] & $[+0.008, +0.082]$ \\
512 & 0.3 & [0.172, 0.220] & [0.121, 0.171] & $[+0.001, +0.098]$ \\
\midrule
16 & 0.5 & 0.209 & 0.156 & $+0.053$ \\
32 & 0.5 & [0.153, 0.176] & [0.113, 0.134] & $[+0.020, +0.063]$ \\
64 & 0.5 & [0.126, 0.148] & [0.082, 0.103] & $[+0.024, +0.066]$ \\
128 & 0.5 & [0.092, 0.124] & [0.058, 0.089] & $[+0.003, +0.065]$ \\
256 & 0.5 & [0.082, 0.125] & [0.040, 0.083] & $[+0.000, +0.084]$ \\
512 & 0.5 & [0.062, 0.118] & [0.030, 0.085] & $[-0.023, +0.088]$ \\
\bottomrule
\end{tabular}
}
\end{table}

\begin{table}[t]
\caption{Large-$N$ rates (bits/use): $99.9\%$ confidence intervals (Theorem~\ref{thm:conf},
$n=20{,}000$ fresh samples) for ORD weights learned by~\eqref{eq:smd} and for the best tested
symmetric Markov input; point estimates for transfer and RLD, whose intervals are in
Table~\ref{tab:transfer-exactq}.}
\label{tab:learned}
\centering
\scriptsize
\begin{tabular}{@{}cccccc@{}}
\toprule
$N$ & $d$ & learned ORD & Markov$(p^\star)$ & transfer & RLD \\
\midrule
32 & 0.1 & [0.642, 0.656] & [0.640, 0.654] & 0.604 & 0.519 \\
32 & 0.3 & [0.307, 0.322] & [0.308, 0.322] & 0.299 & 0.243 \\
32 & 0.5 & [0.167, 0.184] & [0.166, 0.182] & 0.165 & 0.123 \\
\midrule
64 & 0.1 & [0.610, 0.621] & [0.611, 0.622] & 0.556 & 0.468 \\
64 & 0.3 & [0.276, 0.289] & [0.276, 0.288] & 0.255 & 0.203 \\
64 & 0.5 & [0.145, 0.159] & [0.143, 0.155] & 0.137 & 0.092 \\
\midrule
128 & 0.1 & [0.590, 0.599] & [0.590, 0.599] & 0.525 & 0.436 \\
128 & 0.3 & [0.256, 0.267] & [0.258, 0.267] & 0.230 & 0.178 \\
128 & 0.5 & [0.129, 0.141] & [0.128, 0.138] & 0.108 & 0.074 \\
\bottomrule
\end{tabular}

\end{table}

\subsection{Strands of length \(100\)--\(200\) and the value of known boundaries}
\label{sec:N100}
Strand lengths between \(100\) and \(200\) are representative of synthesized DNA
oligonucleotides~\cite{ErlichZielinski2017,HeckelMikutisGrass2019} and of short framed packets.
Fig.~\ref{fig:N100-grid}(a) and Table~\ref{tab:N100-conv} report intervals at \(N=100\) for
\(d\in\{0.01,0.02,0.03,0.05,0.1,\ldots,0.9\}\) and three inputs: the best tested Markov input, whose flip
probability was selected in a separate run, transfer, and RLD; transfer is reported only for
\(d\ge0.1\), the range covered by its profiles.
The point estimates satisfy Markov \(>\) transfer \(>\) RLD at every \(d\); the intervals confirm the
first inequality for \(d\le0.5\) and the second for \(d\le0.8\), and all intervals lie below the
cap \(C_{10}(d)/10\ge C_{100}(d)/100\) of Theorem~\ref{thm:sandwich}, given by the dual value
\(\max_x D(W_x\Vert q)\) of~\eqref{eq:cert} at the final BA iterate and rounded upward.
At small \(d\) the flat run-count law is far from optimal (\(0.705\) versus \(0.940\)~bits/use at
\(d=0.01\)), whereas the best Markov input is nearly uniform (\(p^\star\in[0.46,0.50]\)).
Since the Markov rate is achievable on the block channel, the lower end of its interval is a lower
bound on \(C_{100}(d)/100\) that holds with probability at least \(1-10^{-3}\).
Table~\ref{tab:N100-conv} also gives Markov intervals at \(N=150\) and \(N=200\), obtained with
\(n=10^5\) fresh samples after the flip probability had been selected on a nine-point grid in a
separate pilot run.

These lower bounds can be compared directly with the unsegmented channel.
The rates achievable with vanishing error probability on the unsegmented BDC are limited by
\(\Cdel\le U_{\rm P}\), the certified upper bound of~\cite{Papailiopoulos2026}.
As shown in Fig.~\ref{fig:N100-grid}(b) and in the last three columns of Table~\ref{tab:N100-conv},
at \(N=100\) the lower end of the Markov interval exceeds \(U_{\rm P}\) at every \(d\) for
which~\cite{Papailiopoulos2026} reports a value, by \(0.010\)~bits/use at \(d=0.01\), \(0.022\) at
\(d=0.05\), \(0.026\) at \(d=0.1\), \(0.012\) at \(d=0.3\), \(0.004\) at \(d=0.5\), and at least \(0.002\)
for \(d\ge0.7\).
At \(d\in\{0.02,0.03\}\), the best available upper bound~\cite{PintoRibeiro2026ParallelBA} is
\(0.025\)--\(0.033\)~bits/use above the Markov point estimate and too loose to certify a gain.
At \(N=150\) the gain is certified for \(d\le0.4\) and for \(d\in\{0.6,0.9\}\), with \(0.018\)~bits/use
at \(d=0.1\); at \(N=200\) it is certified for \(d\le0.3\), with \(0.013\)~bits/use at \(d=0.1\) and
\(0.010\) at \(d=0.05\).
Consequently, at the tested \((N,d)\) listed above, strands with known boundaries support strictly
higher rates per symbol than the capacity of the unsegmented channel, and among the tested values
the gain is largest for \(0.05\le d\le0.2\) and remains about \(1\%\) of the rate at \(d=0.01\).
Where no gain is certified, the Markov lower bound falls short of \(U_{\rm P}\) by at most
\(0.006\)~bits/use, less than the width of the certified bracket \([L_{\rm P},U_{\rm P}]\) at the same
\(d\); these cases are inconclusive, since \(\CN/N\ge\Cdel\) always holds, and a sharper upper bound on
\(\Cdel\) or a better input would be needed to resolve them.
The gain decreases over the tested lengths and vanishes as \(N\to\infty\), because \(\CN/N\to\Cdel\)
(Theorem~\ref{thm:sandwich}).

\begin{figure*}[!t]
\centering
\includegraphics[width=\textwidth]{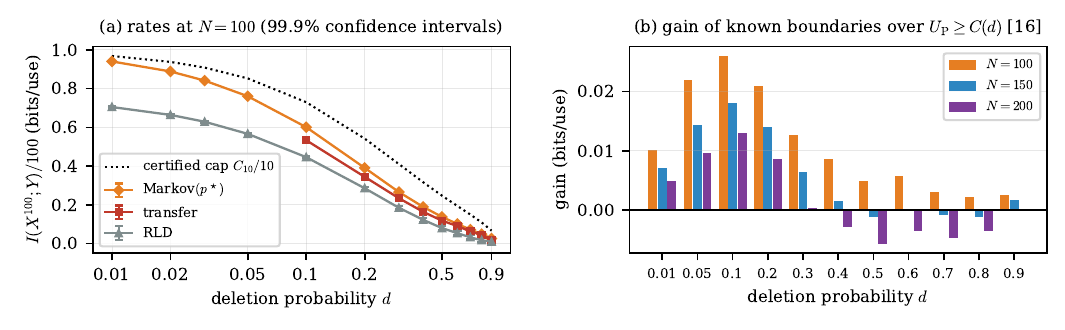}
\caption{Strands of length $100$--$200$.
(a)~$99.9\%$ confidence intervals (Theorem~\ref{thm:conf}) at $N=100$ for the rates of the best
Markov input ($n=40{,}000$), transferred ORD, and RLD ($n=20{,}000$ each), with the cap
$C_{10}(d)/10\ge C_{100}(d)/100$.
(b)~Gain of known strand boundaries: the lower end of the Markov interval, a lower bound on
$\CN/N$, minus the certified upper bound $U_{\rm P}\ge\Cdel$ of~\cite{Papailiopoulos2026}, for
$N\in\{100,150,200\}$ and the values of $d$ for which~\cite{Papailiopoulos2026} reports $U_{\rm P}$;
positive bars are certified gains.}
\label{fig:N100-grid}
\end{figure*}

\begin{table*}[t]
\caption{Strands of length $N\in\{100,150,200\}$ (bits/use): $99.9\%$ confidence intervals from
Theorem~\ref{thm:conf}, the cap $C_{10}(d)/10\ge C_{100}(d)/100$ (BA dual value, rounded up), the certified upper
bound $U\ge\Cdel$ of~\cite{Papailiopoulos2026} ($^\dagger$: of~\cite{PintoRibeiro2026ParallelBA}, where
\cite{Papailiopoulos2026} reports no value), and the certified gain of known boundaries (lower end of
the Markov interval minus $U$, rounded down; ``--'' where it is not positive).
Transfer profiles are available only for $d\ge0.1$.}
\label{tab:N100-conv}
\centering
\scriptsize
\setlength{\tabcolsep}{3pt}
\begin{tabular}{@{}ccccccccccc@{}}
\toprule
 & \multicolumn{4}{c}{$N=100$} & $N=150$ & $N=200$ & & \multicolumn{3}{c}{gain} \\
\cmidrule(lr){2-5}\cmidrule(lr){6-6}\cmidrule(lr){7-7}\cmidrule(l){9-11}
$d$ & Markov$(p^\star)$ & transfer & RLD & cap & Markov$(p^\star)$ & Markov$(p^\star)$ & $U$ & $100$ & $150$ & $200$ \\
\midrule
0.01 & [0.938, 0.943] & -- & [0.702, 0.708] & 0.969 & [0.935, 0.938] & [0.933, 0.936] & 0.929 & 0.0100 & 0.0070 & 0.0049 \\
0.02 & [0.886, 0.891] & -- & [0.661, 0.668] & 0.939 & [0.882, 0.885] & [0.878, 0.881] & 0.915$^\dagger$ & -- & -- & -- \\
0.03 & [0.839, 0.845] & -- & [0.625, 0.634] & 0.909 & [0.834, 0.837] & [0.830, 0.833] & 0.876$^\dagger$ & -- & -- & -- \\
0.05 & [0.758, 0.764] & -- & [0.562, 0.572] & 0.853 & [0.750, 0.754] & [0.746, 0.749] & 0.737 & 0.0218 & 0.0143 & 0.0097 \\
\midrule
0.1 & [0.598, 0.604] & [0.528, 0.540] & [0.440, 0.453] & 0.731 & [0.590, 0.594] & [0.585, 0.588] & 0.573 & 0.0260 & 0.0180 & 0.0130 \\
0.2 & [0.388, 0.394] & [0.337, 0.352] & [0.278, 0.294] & 0.543 & [0.381, 0.384] & [0.375, 0.379] & 0.368 & 0.0209 & 0.0140 & 0.0085 \\
0.3 & [0.263, 0.270] & [0.226, 0.243] & [0.176, 0.195] & 0.411 & [0.257, 0.260] & [0.251, 0.254] & 0.251 & 0.0127 & 0.0064 & 0.0003 \\
0.4 & [0.186, 0.193] & [0.154, 0.174] & [0.113, 0.133] & 0.318 & [0.179, 0.183] & [0.175, 0.178] & 0.179 & 0.0086 & 0.0015 & -- \\
0.5 & [0.134, 0.142] & [0.108, 0.129] & [0.070, 0.090] & 0.248 & [0.129, 0.132] & [0.124, 0.128] & 0.131 & 0.0048 & -- & -- \\
0.6 & [0.098, 0.105] & [0.080, 0.102] & [0.043, 0.064] & 0.195 & [0.092, 0.096] & [0.089, 0.092] & 0.093 & 0.0056 & $<10^{-4}$ & -- \\
0.7 & [0.069, 0.076] & [0.053, 0.074] & [0.022, 0.043] & 0.151 & [0.065, 0.069] & [0.061, 0.065] & 0.067 & 0.0029 & -- & -- \\
0.8 & [0.046, 0.053] & [0.035, 0.055] & [0.007, 0.028] & 0.110 & [0.043, 0.046] & [0.040, 0.043] & 0.045 & 0.0021 & -- & -- \\
0.9 & [0.024, 0.029] & [0.008, 0.027] & [0.000, 0.020] & 0.068 & [0.023, 0.026] & [0.022, 0.024] & 0.023 & 0.0025 & 0.0017 & -- \\
\bottomrule
\end{tabular}

\end{table*}

\subsection{Coding across strands versus coding within one strand}
\label{sec:md}
The block capacity describes codes that span many strands.
A complementary question concerns a code confined to one strand: the largest number \(M\) of
length-\(N\) codewords that a single use of the channel distinguishes with frame error probability
at most \(\varepsilon\).
Morozov and Duman~\cite{MorozovDuman2026FiniteLength} bound this number from above by a
meta-converse with a layered reference output distribution (LO-CVB), computed after a genie splits
the strand into \(n\) segments of length \(m\le32\).
The block capacity yields a second converse for the same quantity.

\begin{proposition}[Single-strand converse]
\label{prop:fano}
Let a code of \(M\) codewords of length \(N\) be used once on the BDC with average error probability
at most \(\varepsilon\le1/2\), and let \(U\ge\Cdel\).
Then
\begin{equation}
\label{eq:fano}
\begin{aligned}
\frac{\log_2M}{N}&\le\frac{\CN/N+h(\varepsilon)/N}{1-\varepsilon}\\
&\le\frac{U+\mathrm{pen}_N(d)+h(\varepsilon)/N}{1-\varepsilon},
\end{aligned}
\end{equation}
where \(h\) is the binary entropy function.
\end{proposition}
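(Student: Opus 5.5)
The first inequality is Fano's inequality applied to one use of the block channel; the second substitutes an upper bound on \(\CN\) obtained from Lemma~\ref{lem:fd}.

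\textbf{First inequality.} Let \(J\) be uniform on the \(M\) messages and \(X^N\) the corresponding codeword. Let \(Y\) be the BDC output and \(\hat J\) the decoder's estimate, so \(\Pr(\hat J\ne J)\le\varepsilon\).

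1. By Fano's inequality, \(\HH(J\mid Y)\le h(\varepsilon)+\varepsilon\log_2(M-1)\le h(\varepsilon)+\varepsilon\log_2M\). The bound \(h(\Pr(\hat J\ne J))\le h(\varepsilon)\) uses that \(h\) is nondecreasing on \([0,1/2]\) and \(\varepsilon\le1/2\). This is the only place the hypothesis \(\varepsilon\le1/2\) enters.
2. Since \(J\to X^N\to Y\) is a Markov chain, data processing gives \(\I(J;Y)\le\I(X^N;Y)\).
3. The input law of \(X^N\) is a feasible point of~\eqref{eq:CN}, so \(\I(X^N;Y)\le\CN\).
4. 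Combining, \(\log_2M=\HH(J)=\I(J;Y)+\HH(J\mid Y)\le\CN+h(\varepsilon)+\varepsilon\log_2M\).

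Rearranging gives \((1-\varepsilon)\log_2M\le\CN+h(\varepsilon)\). Dividing by \(N(1-\varepsilon)>0\) yields the first line of~\eqref{eq:fano}.

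\textbf{Second inequality.} It suffices to show \(\CN/N\le\Cdel+\mathrm{pen}_N(d)\); then \(U\ge\Cdel\) and monotonicity of the right-hand side in \(\CN\) finish the proof.

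1. Apply Lemma~\ref{lem:fd} to a maximizer \(p^\star\) of~\eqref{eq:CN}. The maximizer exists because the simplex is compact and \(\I\) is continuous.
2. This gives \(\Cdel\ge\I(p^\star)/N-\mathrm{pen}_N(d)=\CN/N-\mathrm{pen}_N(d)\).
3. Hence \(\CN/N\le\Cdel+\mathrm{pen}_N(d)\le U+\mathrm{pen}_N(d)\), which is the second line of~\eqref{eq:fano}.

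\textbf{Main obstacle.} Both steps are essentially routine. The only care needed is in the Fano step: the error criterion is the average error probability, which is what Fano requires, and \(\varepsilon\le1/2\) is needed so that \(h(\varepsilon)\) dominates \(h\) of the actual error probability. A maximal-error code is also covered, since its average error is at most \(\varepsilon\) as well.
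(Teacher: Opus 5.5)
Your proposal is correct and follows essentially the same route as the paper: Fano's inequality for a uniform message (with \(\varepsilon\le1/2\) used to replace \(h(P_e)\) by \(h(\varepsilon)\)), the Markov chain and the definition of \(\CN\) to bound \(\I(W;Y)\), rearranging, and then Lemma~\ref{lem:fd} applied to the maximizer of~\eqref{eq:CN} to get \(\CN/N\le\Cdel+\mathrm{pen}_N(d)\le U+\mathrm{pen}_N(d)\). Your extra remarks on the existence of the maximizer and on maximal-error codes are harmless refinements of the same argument.
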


\begin{proof}
Let the message \(W\) be uniform on the \(M\) codewords, let \(X^N\) be the codeword of \(W\), and let
\(\hat W\) be the decision made from \(Y\), with error probability \(P_e\le\varepsilon\).
By Fano's inequality~\cite{CoverThomas2006}, \(\HH(W\mid Y)\le h(P_e)+P_e\log_2M\le
h(\varepsilon)+\varepsilon\log_2M\), where the second step uses \(P_e\le\varepsilon\le1/2\).
Since \(W\to X^N\to Y\) is a Markov chain,
\(\log_2M=\I(W;Y)+\HH(W\mid Y)\le\I(X^N;Y)+h(\varepsilon)+\varepsilon\log_2M
\le\CN+h(\varepsilon)+\varepsilon\log_2M\), which gives the first inequality.
Applying Lemma~\ref{lem:fd} to the maximizer of~\eqref{eq:CN} gives \(\CN/N\le\Cdel+\mathrm{pen}_N(d)
\le U+\mathrm{pen}_N(d)\), which gives the second.
\end{proof}

Table~\ref{tab:md} compares~\eqref{eq:fano}, evaluated with \(U=U_{\rm P}\), with the LO-CVB
at \(d=0.2\) and \(\varepsilon=0.2\), the case tabulated in~\cite[Table~III]{MorozovDuman2026FiniteLength}
for \(m=23\) and \(N=23n\).
The LO-CVB is tighter for \(N=23\), the two bounds differ by less than \(0.001\)~bits/use at \(N=46\),
and from \(N=92\) on~\eqref{eq:fano} is smaller by \(0.07\) to \(0.23\)~bits/use.
The reason is that the LO-CVB is computed for an \(m\)-bit genie-aided subchannel and does not decrease
with \(n\), whereas the penalty in~\eqref{eq:fano} decays as \(\log N/N\), so that for fixed
\(\varepsilon\) the bound tends to \(U/(1-\varepsilon)\).
The two converses are therefore complementary.
The last column of Table~\ref{tab:md} gives the certified lower bound on \(\CN/N\), which a code across
many strands achieves with vanishing error probability.
At the tested lengths \(N\in\{184,368,736,1472\}\) and \(\varepsilon=10^{-3}\), the
converse~\eqref{eq:fano} exceeds this lower bound by at most \(0.016\)~bits/use.
Hence, at these lengths, no single-strand code with a small frame error probability can
exceed the rate of coding across strands by more than this margin; whether such codes approach that
rate requires single-strand achievability bounds, which are currently available only for
\(N\le17\)~\cite{MorozovDuman2026FiniteLength}.

\begin{table}[t]
\caption{Single-strand coding at $d=0.2$ (bits/use): the LO-CVB
of~\cite[Table~III]{MorozovDuman2026FiniteLength} ($m=23$, $N=23n$), the converse~\eqref{eq:fano}
with $U=U_{\rm P}$~\cite{Papailiopoulos2026}, and the lower end of the $99.9\%$ Markov interval,
a lower bound on $\CN/N$.}
\label{tab:md}
\centering
\scriptsize
\setlength{\tabcolsep}{3pt}
\begin{tabular}{@{}ccccc@{}}
\toprule
 & \multicolumn{3}{c}{upper bounds on $\log_2M/N$} & lower bound \\
\cmidrule(lr){2-4}
$N$ & LO-CVB, $\varepsilon=0.2$ & \eqref{eq:fano}, $\varepsilon=0.2$ & \eqref{eq:fano}, $\varepsilon=10^{-3}$ & on $\CN/N$ \\
\midrule
23 & 0.548 & 0.660 & 0.498 & 0.460 \\
46 & 0.573 & 0.574 & 0.444 & 0.419 \\
92 & 0.594 & 0.524 & 0.412 & 0.393 \\
184 & 0.622 & 0.495 & 0.393 & 0.377 \\
368 & 0.662 & 0.479 & 0.382 & 0.368 \\
736 & 0.663 & 0.470 & 0.376 & 0.363 \\
1472 & 0.702 & 0.465 & 0.372 & 0.359 \\
\bottomrule
\end{tabular}

\end{table}

\section{Relation to the Unsegmented Channel}
\label{sec:papa}

\begin{table*}[!t]
\caption{Lower bounds on $\Cdel$ at long blocklengths.
For a Markov input with flip probability $p$ and blocklength $N$, $n$ fresh samples give the
estimate $\widehat\I/N$, a $99.9\%$ confidence interval for $\I(X^N;Y)/N$, and the bound $L_\delta$
of~\eqref{eq:cap-conf}, which holds with probability at least $1-10^{-3}$.
Comparison: Drinea--Mitzenmacher (D--M)~\cite{DrineaMitzenmacher2007}, certified run-length bounds
of~\cite{KhodaiemehrFengDuman2026RL}, and the certified enclosure $[L_{\rm P},U_{\rm P}]$
of~\cite{Papailiopoulos2026} (``--'': no value reported).
Lower ends are rounded down and upper ends up.}
\label{tab:papa}
\centering
\scriptsize
\setlength{\tabcolsep}{3pt}
\begin{tabular}{@{}cccccccccccc@{}}
\toprule
$d$ & $p$ & $N$ & $n$ & $\widehat\I/N$ & CI for $\I/N$ & $\mathrm{pen}_N$ & $L_\delta$ & D--M & \cite{KhodaiemehrFengDuman2026RL} & $L_{\rm P}$ & $U_{\rm P}$ \\
\midrule
0.01 & 0.5 & 2000 & 400k & 0.92415 & [0.92400, 0.92430] & 0.00210 & \textbf{0.92191} & 0.92211 & 0.92212 & 0.92211 & 0.92854 \\
0.02 & 0.49 & 2000 & 400k & 0.86696 & [0.86679, 0.86713] & 0.00235 & \textbf{0.86445} & 0.86440 & 0.86456 & -- & -- \\
0.03 & 0.48 & 2000 & 400k & 0.81747 & [0.81728, 0.81765] & 0.00249 & \textbf{0.81480} & 0.81443 & 0.81477 & -- & -- \\
0.05 & 0.47 & 2000 & 400k & 0.73261 & [0.73241, 0.73281] & 0.00267 & \textbf{0.72976} & 0.72829 & 0.72939 & 0.72983 & 0.73646 \\
\midrule
0.1 & 0.438 & 2000 & 200k & 0.56971 & [0.56937, 0.57005] & 0.00290 & \textbf{0.56650} & 0.56196 & 0.56486 & 0.56660 & 0.57245 \\
0.2 & 0.3294 & 1000 & 400k & 0.36187 & [0.36160, 0.36213] & 0.00571 & \textbf{0.35591} & 0.34669 & 0.35127 & 0.35674 & 0.36728 \\
0.3 & 0.246 & 2000 & 100k & 0.23605 & [0.23545, 0.23666] & 0.00320 & \textbf{0.23229} & 0.22243 & 0.22616 & 0.23903 & 0.25084 \\
0.4 & 0.2079 & 2000 & 100k & 0.16012 & [0.15950, 0.16075] & 0.00325 & \textbf{0.15629} & 0.14841 & 0.15136 & 0.16350 & 0.17822 \\
0.5 & 0.154 & 2000 & 100k & 0.11080 & [0.11017, 0.11142] & 0.00326 & \textbf{0.10695} & 0.10186 & 0.10414 & 0.11454 & 0.13017 \\
0.6 & 0.101 & 2000 & 60k & 0.07680 & [0.07592, 0.07769] & 0.00325 & \textbf{0.07273} & 0.06956 & 0.07199 & 0.07960 & 0.09294 \\
0.7 & 0.064 & 2000 & 60k & 0.05135 & [0.05054, 0.05216] & 0.00320 & \textbf{0.04739} & 0.04532 & 0.04796 & 0.05127 & 0.06653 \\
0.8 & 0.0392 & 2000 & 60k & 0.03131 & [0.03061, 0.03200] & 0.00310 & \textbf{0.02755} & 0.02727 & 0.02891 & 0.02884 & 0.04436 \\
0.9 & 0.01248 & 2000 & 60k & 0.01455 & [0.01406, 0.01503] & 0.00290 & \textbf{0.01119} & 0.01238 & 0.01322 & 0.01292 & 0.02218 \\
\bottomrule
\end{tabular}

\end{table*}

\begin{figure*}[!t]
\centering
\includegraphics[width=\textwidth]{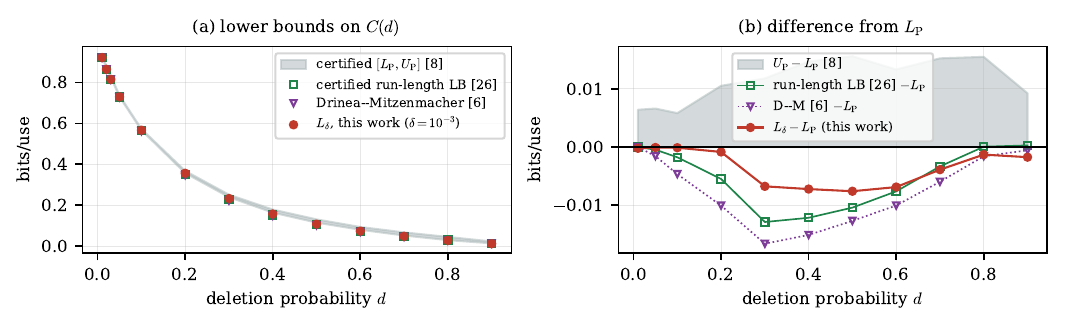}
\caption{(a)~Lower bounds on $\Cdel$: $L_\delta$ from Markov inputs at $N\le2000$ (this work,
$\delta=10^{-3}$), the certified run-length bounds of~\cite{KhodaiemehrFengDuman2026RL}, the
Drinea--Mitzenmacher bound~\cite{DrineaMitzenmacher2007}, and the certified enclosure
$[L_{\rm P},U_{\rm P}]$ of~\cite{Papailiopoulos2026} (shaded).
(b)~The same bounds minus $L_{\rm P}$, where available; shaded: $U_{\rm P}-L_{\rm P}$.}
\label{fig:papa}
\end{figure*}

The block channel is the object of this paper, and Section~\ref{sec:N100} showed that it supports
strictly higher rates than the unsegmented channel at practical strand lengths.
The tools developed for it also yield information in the opposite direction: by
Lemma~\ref{lem:fd}, a block rate minus the length penalty is a lower bound on \(\Cdel\).
Applied to the exact rates at \(N\le16\), this conversion gives weak deterministic bounds, because
the penalty is \(0.14\)--\(0.20\)~bits/use (the strongest, \(0.553\) at \(d=0.1\), is \(0.009\) below the
Drinea--Mitzenmacher bound~\cite{DrineaMitzenmacher2007}); this section therefore applies
Theorem~\ref{thm:conf} at blocklengths up to \(N=2000\), where the penalty is small.

\subsection{Exact marginals for finite-state sources}
Longer blocks require sources whose marginals remain computable, and the recursion of
Proposition~\ref{prop:marginal} extends to any finite-state source.
Let \(X\) be generated by a Markov chain on a finite state set \(\mathcal S\) with emitted bit
\(\beta(s)\), initial law \(\pi_1\), and transition matrix \(T\).
For a fixed output \(y\in\{0,1\}^m\), let
\(F_i(j,s)=\Pr(S_i=s,\ \text{survivors of }X_1^i=y_1^j)\).
Then \(F_1(j,s)=\pi_1(s)[d\,\mathbf 1\{j=0\}+(1-d)\mathbf 1\{j=1,\,y_1=\beta(s)\}]\) and
\begin{equation}
\label{eq:fs-dp}
F_{i+1}(j,s')=d\,G_i(j,s')+(1-d)\,\mathbf 1\{y_j=\beta(s')\}\,G_i(j-1,s'),
\end{equation}
where \(G_i(j,s')=\sum_sF_i(j,s)T(s,s')\), the second term is present only for \(j\ge1\), and
\(F_i(j,s)=0\) for \(j>\min\{i,m\}\).
Hence \(q(y)=\sum_sF_N(m,s)\) costs \(O(N(m+1)E)\) operations, where \(E\) is the number of nonzero
entries of \(T\); this is \(O(N(m+1)|\mathcal S|)\) when each state has two successors.
The estimator~\eqref{eq:plain-mc} remains unbiased with bounded summands, and, for a smooth
fixed-support parameterization \(\theta\) of \(T\), multiplying its summands minus a leave-one-out
baseline by \(\nabla_\theta\ln P_\theta(X_t)\) gives an unbiased estimate of \(\nabla_\theta\I(X^N;Y)\).

Richer finite-state sources, a renewal source with \(24\) states and a run-context source with
\(120\) states (the state space of the largest source in~\cite{Papailiopoulos2026}), learned on the
unbiased gradient at \(N=300\), raise the point estimates of \(\I(X^N;Y)/N\) over the best Markov
input by \(0.002\)--\(0.007\)~bits/use for \(d\in\{0.3,\ldots,0.6\}\).
Our evaluation of \(t(\tau)\) in Theorem~\ref{thm:conf} uses the law of \(-\log_2P(X)\), which we have
not computed for these sources (any valid tail bound would suffice); the bounds below therefore use
Markov inputs.

\subsection{Lower bounds at long blocklengths}
For each \(d\in\{0.01,0.02,0.03,0.05,0.1,\ldots,0.9\}\), we fixed in advance a Markov input, a
blocklength (\(N=2000\), except \(N=1000\) at \(d=0.2\)), and a sample size \(n\); we then drew \(n\) fresh samples and evaluated~\eqref{eq:cap-conf}
with \(\delta=10^{-3}\).
Table~\ref{tab:papa} and Fig.~\ref{fig:papa} compare the resulting bounds \(L_\delta\) with the
Drinea--Mitzenmacher bound~\cite{DrineaMitzenmacher2007}, the certified run-length
bounds~\cite{KhodaiemehrFengDuman2026RL}, and the certified enclosure \([L_{\rm P},U_{\rm P}]\)
of~\cite{Papailiopoulos2026}.
Three observations follow.
First, \(L_\delta\) exceeds the Drinea--Mitzenmacher bound for \(0.02\le d\le0.8\), by up to
\(0.0099\)~bits/use at \(d=0.3\), and the run-length bounds of~\cite{KhodaiemehrFengDuman2026RL} for
\(0.03\le d\le0.6\), by up to \(0.0061\) at \(d=0.3\).
Second, \(L_\delta\) does not exceed \(L_{\rm P}\): the gap is \(1.1\times10^{-4}\) at \(d=0.1\) and
\(8.4\times10^{-4}\) at \(d=0.2\), and it grows to about \(0.007\) for \(0.3\le d\le0.6\), where the
certified source of~\cite{Papailiopoulos2026} is much richer than a first-order Markov chain.
Third, at \(d\ge0.7\) the run-length bounds of~\cite{KhodaiemehrFengDuman2026RL} remain larger, and at
\(d=0.9\) the penalty places \(L_\delta\) below the Drinea--Mitzenmacher bound.
At small deletion probabilities, the regime of DNA storage, all bounds lie within
\(1.6\times10^{-3}\) of each other.
At \(d=0.01\) and \(0.05\), \(L_\delta\) lies \(2.0\times10^{-4}\) and \(7\times10^{-5}\) below \(L_{\rm P}\);
at \(d=0.03\), where~\cite{Papailiopoulos2026} reports no value, \(L_\delta=0.81480\) exceeds the
run-length bound of~\cite{KhodaiemehrFengDuman2026RL} by only \(3\times10^{-5}\); given this margin
and the confidence level, we regard the small-\(d\) bounds as matching the best known ones.

The half-width of the \(99.9\%\) interval for \(\I(X^N;Y)/N\) lies between \(1.5\times10^{-4}\) and
\(8.8\times10^{-4}\).
For \(0.3\le d\le0.6\), the distance to \(L_{\rm P}\) is determined by the input family and by the
length penalty, which decays only as \((\log_2N)/(2N)\); at \(d\in\{0.05,0.1\}\), by contrast, the
point estimate minus the penalty exceeds \(L_{\rm P}\) (by \(2.1\times10^{-4}\) at \(d=0.1\)), and only the
statistical margin places \(L_\delta\) below it.
Each bound holds with probability at least \(1-10^{-3}\), and all thirteen hold simultaneously with
probability at least \(0.987\).
They are not computer-assisted proofs in the sense of~\cite{Papailiopoulos2026}, which relies on
outward-rounded interval arithmetic.

The method of~\cite{Papailiopoulos2026} certifies deterministic lower and upper bounds for all \(d\)
using an A100 GPU and about 216~GB of RAM~\cite[Sec.~13]{Papailiopoulos2026}; the present method
bounds the rate of a given input at any blocklength with prescribed confidence, and the nine bounds
for \(d\ge0.1\) in Table~\ref{tab:papa} required \(3.0\)~CPU-hours with less than 1~GB of memory.
On the other hand, it provides no upper bound on \(\Cdel\), and its guarantee is statistical rather
than deterministic.

\section{Discussion}
\label{sec:accuracy}
\label{sec:limit}

The exact-marginal estimator also quantifies the errors of the earlier version of this work.
All large-\(N\) rates in~\cite{KhodaiemehrFeng2026arXiv} were computed by nested Monte Carlo, which
replaces \(q_w(y)\) in~\eqref{eq:plain-mc} by \(\max\{\bar e_K(y),0.5/K\}\,d^{N-|y|}(1-d)^{|y|}\), where
\(\bar e_K(y)\) is the mean embedding count of \(y\) over an independent pool of \(K\) inputs.
Its bias depends on the input law and has no fixed sign~\cite{Rainforth2018nesting}: compared
with~\eqref{eq:plain-mc}, the earlier estimates for \(16\le N\le512\) were too low by up to
\(0.33\)~bits/use at \(d=0.1\) and too high by up to \(0.28\) at \(d=0.3\), above the certified bound of
Theorem~\ref{thm:sandwich}.
The candidate capacity bounds of that version are withdrawn.

Numerically certified optima are limited to \(N\le16\) for ORD and \(N\le12\) for unrestricted inputs, and the
orbit reduction, although exact, remains exponential (\(K_N\approx2^{N-2}\)).
ORD is uniform within run-count classes, and the transfer profile is heuristic.
The model is binary and contains deletions only, whereas DNA storage uses a quaternary alphabet
and also exhibits substitutions, insertions, and sampling effects~\cite{HeckelMikutisGrass2019,ShomoronyHeckel2021}.
Run counts and the symmetry reduction extend to larger alphabets, and substitutions and insertions
add transitions to the recursion of Proposition~\ref{prop:marginal}; these extensions are left for
future work.
Finally, the length penalty of Lemma~\ref{lem:fd} is paid in full; sharper block conversions would
raise the bounds on \(\Cdel\).

\section{Conclusion}
\label{sec:concl}
This paper studied the block capacity \(\CN\) of the binary deletion channel, the operational limit
for systems that store or transmit many short strands with known boundaries.
For \(d<1\), its maximizer is unique, fully supported, and invariant under complementation and
reversal, and the ORD family attains it within \(2\%\) through \(N=10\), with numerically certified
optima through \(N=16\).
An exact output-marginal recursion turns every simulation into a confidence statement; with it,
learned ORD weights improve on transfer by up to \(0.069\)~bits/use at \(N=128\) and are matched by a
one-parameter Markov input within the statistical precision, whereas InfoNCE estimates saturate at
\(\log_2B\).
For strands of \(100\)--\(200\) bits, known boundaries raise the achievable rate by up to
\(0.026\)~bits/use above the best certified upper bound on \(\Cdel\), the capacity of the unsegmented
channel.
The block capacity also gives an upper bound on the rate of codes that use a single strand, which is
tighter than existing finite-length bounds at the tested lengths \(N\ge92\).
Finally, block rates at \(N\le2000\) give lower bounds on \(\Cdel\) that are within
\(1.1\times10^{-4}\)~bits/use of the best certified lower bound at \(d=0.1\).

\bibliographystyle{IEEEtran}
\bibliography{refs}

\end{document}